\documentclass[journal]{IEEEtran}

\usepackage[cmex10]{amsmath}
\allowdisplaybreaks[4]

\usepackage{enumitem}
\usepackage{graphicx}
\usepackage{color}
\usepackage{xcolor}
\usepackage{amsmath}
\usepackage{mathtools}
\usepackage{multicol}
\usepackage{multirow}
\usepackage[english]{babel}
\usepackage{blindtext}
\usepackage{algorithm}
\usepackage{algorithmic}
\usepackage{balance}
\usepackage{amsfonts}
\usepackage{bm}
\usepackage{stfloats}
\usepackage{subfig}
\usepackage{amsthm}
\usepackage{amssymb}
\usepackage{setspace}
\usepackage[nosort]{cite}
\usepackage{CJK}
\usepackage{cite}
\usepackage{caption}
\usepackage{comment}
\usepackage{array,multirow}
\usepackage{graphicx}
\newtheorem{lemma}{Lemma}
\newtheorem{theorem}{Theorem}

\newtheorem{remark}{Remark}
\newtheorem{proposition}{Proposition}
\theoremstyle{plain}
\newtheorem{definition}{Definition}

\usepackage[table]{xcolor}

\begin{document}

\captionsetup[figure]{labelformat={default},labelsep=period,name={Fig.}}
\captionsetup[table]{labelformat={default},labelsep=period,name={Table}}

\title{Fundamentals of NOMA in Low-Earth Orbit Coordinated Multi-Satellite Networks}

\author{Xiangyu Li,
        Bodong Shang,~\IEEEmembership{Member,~IEEE},
        Junchao Ma,
        Qingqing Wu,~\IEEEmembership{Senior Member,~IEEE},
        Jie Feng,~\IEEEmembership{Member,~IEEE}, 
        and Deshuang Huang,~\IEEEmembership{Fellow,~IEEE}
\thanks{This work was supported in part by the Zhejiang Provincial Natural Science Foundation of China under Grant No. LQN25F010003, in part by the Ningbo Natural Science Foundation under grant 2025J021, in part by the State Key Laboratory of Integrated Services Networks, Xidian University, in part by the Fundamental Research Funds for the Central Universities under Grant QTZX26142, in part by NSFC under grant 62401233, in part by the Natural Science Foundation of Jiangsu Province under grant BK20241076, and in part by the YongRiver Scientific and Technological Innovation Project No. 2023A-187-G. \textit{(Corresponding author: Bodong Shang)}}
\thanks{Xiangyu Li and Bodong Shang are with the Zhejiang Key Laboratory of Industrial Intelligence and Digital Twin, Eastern Institute of Technology, Ningbo, Zhejiang 315200, China, also with the State Key Laboratory of Integrated Services Networks, Xidian University, Xi’an 710071, China, and also with the Department of Electronic Engineering, Shanghai Jiao Tong University, Shanghai 200240, China (e-mail: xyli@eitech.edu.cn; bdshang@eitech.edu.cn).}
\thanks{Junchao Ma is with the School of Electrical and Information Engineering, Jiangsu University of Technology, Changzhou 213001, China (e-mail: junchao\_ma@jstu.edu.cn).}
\thanks{Qingqing Wu is with the Department of Electronic Engineering, Shanghai Jiao Tong University, Shanghai 200240, China (e-mail: qingqingwu@sjtu.edu.cn).}
\thanks{Jie Feng is with the School of Telecommunications Engineering, Xidian University, Xi’an 710071, China (e-mail: fengjie@xidian.edu.cn).}
\thanks{Deshuang Huang is with the College of Information Science and Technology, Eastern Institute of Technology, Ningbo, Zhejiang 315200, China (e-mail: dshuang@eitech.edu.cn).}
}

\markboth{IEEE Transactions on Green Communications and Networking}%
{Li \MakeLowercase{\textit{et al.}}: Fundamentals of NOMA in Low-Earth Orbit Coordinated Multi-Satellite Networks}

\maketitle

\begin{abstract}
Coordinated multi-satellite (CoMS) transmission and non-orthogonal multiple access (NOMA) are envisioned to jointly enhance coverage, capacity, and spectrum efficiency for satellite networks. Their integration into a unified CoMS-NOMA framework will allow more efficient, reliable, and energy-efficient multi-user access.  
This paper investigates the downlink performance of CoMS-NOMA networks from a system-level perspective, in which multiple satellites cooperatively serve multiple users via NOMA. 
Leveraging tools from stochastic geometry, related angles and distances in CoMS-NOMA are first derived as intermediate results. Then, we obtain the combined signal power distributions and analyze coverage and spectrum performance under both inter- and intra-satellite interference, accounting for potential imperfect successive interference cancellation (SIC). The analytical model is validated across a range of system parameters, including the number of satellites, service region angle, error-propagation factor, and power allocation coefficients.
Numerical results indicate that increasing the number of cooperative satellites does not always improve coverage and spectrum efficiency. 
Additionally, while a higher main-lobe gain improves coverage, a near-perfect SIC provides only slightly greater benefits than a reasonably good SIC.
With properly selected power allocation coefficients, CoMS-NOMA achieves up to a 270\% improvement in coverage and a 56\% gain in sum spectral efficiency, compared with conventional orthogonal and single-satellite schemes, indicating potential for green, energy-efficient satellite networking.
\end{abstract}

\begin{IEEEkeywords}
Coordinated transmission, low-Earth orbit (LEO) satellite, non-orthogonal multiple access (NOMA), coverage probability, and spectral efficiency.
\end{IEEEkeywords}

\IEEEpeerreviewmaketitle

\section{Introduction}

\IEEEPARstart{I}{n} recent years, satellite communication (SatCom) has become an indispensable component of next-generation wireless systems, providing global coverage and ubiquitous connectivity for broadband and Internet of Things (IoT) applications \cite{su2019broadband,ding2024improving}. 
Depending on their functions and deployment locations, satellites operate in the low-Earth orbit (LEO), medium-Earth orbit (MEO), and geostationary Earth orbit (GEO). Among them, LEO satellites are attracting significant interest from researchers in academia and industry \cite{li2025advancing}. 
Due to their high data rates, reduced latency, comparatively lower power consumption, and maintenance costs, LEO satellites are rapidly developing for large-scale deployments to offer direct-to-cell services \cite{peng2026enhanced} in the terrestrial network (TN), especially in remote and rural areas.

However, while the number of LEO satellites is increasing, the user density and service diversity continue to grow explosively \cite{luo2024leo}. 
This growth intensifies competition for limited spectrum resources and heightens the demand for dynamic inter-satellite coordination. 
Consequently, severe challenges arise in the efficient utilization of spectrum resources and the coordinated management of multi-satellite transmission.
From end users’ viewpoint, inefficient physical-layer transmission and interference management may lead to degraded communication reliability, increased packet loss, repeated retransmissions, and larger service delays \cite{wang2026packet}, particularly for latency-sensitive or connectivity-critical applications. In dense LEO satellite networks, severe inter-satellite interference and inefficient spectrum utilization can further reduce user quality of service and limit the capability of the network for massive simultaneous access.

\vspace{-4mm}
\subsection{Related Works}
Traditional orthogonal multiple access (OMA) schemes, which allocate distinct time, frequency, or spatial resources to different users, have been widely applied by default in most SatCom works, such as \cite{okati2020downlink,park2022tractable,jia2022analytic}, which provided downlink or uplink analysis in global large-scale satellite to TN user terminals (UTs) communications. 
While intra-satellite interference-free transmission can be ensured, satellite OMA suffers from limited spectral efficiency and may struggle to accommodate massive connectivity and high-throughput demands.
This limitation is especially significant for UTs in remote or underserved regions, where multiple UTs may simultaneously request urgent connections. In such situations, a single satellite (SglS) with OMA schemes may not be sufficient to serve these UTs effectively.

To improve spectrum utilization for SatCom, non-orthogonal multiple access (NOMA) has emerged as a promising technique, enabling multiple UTs to share the same resource block (RB) through power-domain multiplexing and successive interference cancellation (SIC).
To explore the effects of NOMA on a typical serving satellite, the authors in \cite{li2024performance} and the authors in \cite{yang2025performance} compared NOMA-aided SatCom by setting a near-user and a far-user to demonstrate the superiority of NOMA over OMA. 
By extending NOMA to multi-user transmission scenarios, \cite{li2026coverage} provided a new level of theoretical analysis to investigate the impact of the number of UTs on system spectral efficiency; however, this analysis relied only on a typical satellite-based service. 
Despite the potentially increased system decoding complexity, multi-user scenarios can be of practical significance as the number of UTs to be served grows.
In addition, the authors in \cite{gao2020performance} analyzed the downlink performance of an LEO satellite employing NOMA to serve two UTs under Doppler shift effects. The impact of hardware impairments on downlink NOMA and cognitive radio transmission in satellite systems was examined in \cite{Akhmetkaziyev2021performance}. 
Furthermore, the authors in \cite{khan2024ris} studied reconfigurable intelligent surface (RIS)-assisted LEO satellite communications with NOMA from an energy-efficient aspect, and the authors in \cite{asif2025transmissive} took a step further by applying transmissive RIS using hybrid NOMA. 
It is worth mentioning that these works and recent studies on satellite NOMA are limited to scenarios involving an SglS setting for service provision, thereby neglecting the potential benefits of multi-satellite cooperation.

To address the limitations of the SglS service architecture, the coordinated multi-satellite (CoMS) network, or the multi-satellite cooperative communication (MSCC) system \cite{shang2026multi,pan2026multi}, has recently attracted researchers' attention. 
In \cite{li2024analytical}, an analytical CoMS joint transmission model was proposed to analyze coverage and data rate in a harsh and complex fading environment. The authors of \cite{qu2025coverage} further studied an MSCC system employing a two-state Markov channel model and determined the optimal satellite density to maximize coverage probability. 
Due to the inherent characteristics of SatCom networks, the number of cooperative satellites and their orbital altitudes can be significant for real-world deployment, and this choice involves a trade-off to achieve the optimal coverage probability, as shown in \cite{shang2023coverage}.
To facilitate cooperative satellite networking, the concept of cell-free massive multiple-input multiple-output (CF-mMIMO) has recently been introduced to SatCom. The authors in \cite{abdelsadek2022distributed} maximized network throughput and minimized handover rate by employing satellite-UT transmission in a CF-mMIMO manner. The benefits of such satellite cooperation were later studied in \cite{abdelsadek2023broadband}, demonstrating its superiority over single-satellite (SglS) connectivity.
The above works relied mainly on coherent joint transmission, which requires accurate channel state information (CSI) and tight phase synchronization among multiple satellites \cite{tanbourgi2014tractable}. 
However, the high mobility of LEO satellites and the rapid, time-varying nature of their channels make it challenging to maintain synchronization and reliably acquire CSI \cite{baeza2022non,wang2025non}.
To address this issue, a non-coherent joint transmission (NC-JT) strategy can be adopted for CoMS transmission \cite{d2024coherent,wang2023non}. In this case, cooperating satellites jointly transmit the same data symbols to the target UTs without prior phase alignment, and the received non-coherent signal components combine constructively to enhance the overall received power.

Building upon the respective advantages of NOMA and CoMS transmission, their integration, termed \textit{CoMS-NOMA}, offers a new dimension for enhancing the overall performance of SatCom networks.
Specifically, for the target UTs, NOMA enhances spectral efficiency and user multiplexing within each satellite, while CoMS provides cooperative transmission across satellites.
Their interplay enables a two-fold gain: intra-satellite NOMA multiplexing gain and inter-satellite coordination gain.
Moreover, multi-user NOMA under the CoMS architecture introduces new interactions between power-domain multiplexing and inter-satellite coordination, distinct from conventional two-user NOMA settings. 
The work in \cite{li2019non} investigated a similar CoMS-NOMA architecture in terms of ergodic sum rate; however, it relied on free-space path-loss and Rayleigh small-scale fading assumptions, which were not representative of realistic SatCom environments.
In \cite{elhalawany2022outage}, while practical fading models were incorporated, several fundamental characteristics of the CoMS system, such as the number of satellites, orbital altitude, and antenna lobe gain, were neglected. 
Additionally, both works considered only a limited cooperative configuration with two serving satellites and two UTs served by each satellite, without accounting for interfering satellites, which significantly restricts the practicality and scalability of the CoMS-NOMA network. Therefore, large-scale analysis of CoMS-NOMA with multiple UTs remains largely unexplored.

Stochastic geometry is a powerful mathematical tool for characterizing and analyzing satellite networks with irregular topologies.
Many recent works have explored LEO satellite networks using this tool \cite{okati2020downlink,park2022tractable,jia2022analytic}.
For example, the authors in \cite{okati2020downlink} studied the downlink coverage and rate of LEO satellite constellations by modeling satellite locations as a binomial point process (BPP) in a finite space.
However, to improve tractability, the Poisson point process (PPP) has been regarded as a preferred choice. 
In \cite{park2022tractable}, a tractable downlink model was investigated to capture the effects of orbital altitude, satellite density, and fading parameters under Nakagami fading. It showed that there exists an optimal number of satellites at certain orbital altitudes.
Moreover, to provide guidelines for mega-constellation deployment, the authors in \cite{jia2022analytic} developed an analytical approach to assess uplink performance using a Nakagami-$m$ approximated Shadowed-Rician fading model.

\subsection{Contributions and Paper Organizations}
Motivated by the above observations, we note that the key analytical challenge lies in the joint characterization of CoMS transmission and NOMA decoding with multiple UTs in a system-level framework. 
The studied CoMS-NOMA framework is suitable for dense LEO satellite networks requiring enhanced coverage reliability and spectral efficiency, especially in remote regions with increasing user access, coverage-edge areas with limited signal quality, and delay-sensitive NTN services. By leveraging cooperative multi-satellite transmission and power-domain multiplexing, CoMS-NOMA can improve resource utilization and communication robustness in these practical deployment scenarios.

Different from existing cooperative NOMA studies that mainly focus on terrestrial systems, SglS scenarios, or small-scale transmission analysis, this paper aims to develop a stochastic geometry-based system-level analytical framework for CoMS-NOMA transmission in large-scale LEO satellite networks. 
The studied framework jointly considers NC-JT-based cooperative transmission, inter-satellite interference, SIC operation, and cooperative multi-satellite coverage analysis with multiple UTs, providing tractable performance characterization for practical dense LEO satellite systems.

Specifically, the PPP constellation model is adopted to approximate practical satellite deployments and to enable analytically tractable results with broad applicability. 
A unified analytical framework is developed to characterize the interactions among multi-satellite coordination, multi-user NOMA transmission, inter-satellite interference, and other realistic factors.
The CoMS-NOMA framework studied provides valuable insights into how the system parameters influence network coverage and spectral efficiency.
The main contributions of this paper are summarized as follows:
\begin{itemize}
    \item \textit{CoMS-NOMA System Design:} We develop a CoMS-NOMA joint transmission framework for SatCom networks, incorporating practical factors such as inter-satellite interference, intra-satellite NOMA interference, SIC error propagation, and variations in satellite number, altitude, and antenna gain. The design aims to achieve a two-fold performance gain, i.e., multi-satellite coordination gain and NOMA multiplexing gain. Specifically, multiple satellites within the service region cooperatively serve the UTs, with each satellite employing NOMA transmission. This approach enables additional cooperative signal reinforcement and more efficient utilization of spectrum resources.
    \item \textit{Modeling and Analysis:} To assess system performance, an analytical framework is developed using stochastic geometry. Satellites are modeled on a spherical cap following a PPP manner at a common orbital altitude, while UTs are uniformly distributed in a serving area. The angle and distance distributions are first presented, followed by an analysis of the combined desired signals using the NJ-CT strategy and the aggregated interfering signals. Next, we derive theoretical expressions for the individual and average coverage probabilities, as well as the sum spectral efficiency of UTs.
    \item \textit{System Design Insights:} Extensive simulations and numerical analyses are conducted to quantitatively evaluate the CoMS-NOMA system and the impacts of key parameters. The results show that higher main-lobe gain and improved SIC accuracy are critical to enhancing coverage performance. Moreover, optimal service region angles and power allocation ratios among UTs are identified to maximize coverage probability or sum spectral efficiency. Comparative evaluations with the OMA and SglS schemes further demonstrate the superiority of CoMS-NOMA in some system configurations.
\end{itemize}

The remainder of this paper is organized as follows. 
Section \uppercase\expandafter{\romannumeral2} describes the system model for CoMS-NOMA networks, including the network model, channel model, and signal transmission. 
In Section \uppercase\expandafter{\romannumeral3}, we present key distributions regarding angles and distances. 
Section \uppercase\expandafter{\romannumeral4} explores the mathematical transformation of interfering signals and then analyzes the statistical characteristics of the combined desired signal power, followed by the derivation of coverage probability and spectral efficiency.
Simulations and numerical results are validated, and a discussion of key observations is provided in Section \uppercase\expandafter{\romannumeral5}.
Finally, Section \uppercase\expandafter{\romannumeral6} concludes the paper.
The main notations and related descriptions are summarized in Table \ref{symbol}.

\begin{table*}[t]
\setlength{\abovecaptionskip}{0cm} 
\setlength{\belowcaptionskip}{-0.0cm}
\captionsetup{font={scriptsize}}
\caption{Main Notations and Descriptions}
\label{symbol}
\centering
\small
\begin{tabular}{m{1.0cm}<{\centering} m{6.8cm}<{\centering} m{1.0cm}<{\centering} m{6.8cm}<{\centering}}
\hline
\hline
\textbf{Notation} & \textbf{Description} & \textbf{Notation} & \textbf{Description}  \\
\hline
\hline
$R_\text{S}$ & Radius of satellite orbit & 
$R_\text{E}$ & Radius of Earth \\
$R_\text{T}$ & TN serving area radius & 
$H_\text{S}$ & Altitude of satellites \\
$\lambda_\text{S}$ & Density of satellites & 
$K$ & Number of NOMA UTs \\
${{\mathcal{S}}_{\text{CS}}}$ & Set of cooperative satellites & 
${{\mathcal{I}}_{\text{IS}}}$ & Set of interfering satellites \\
$N_\text{CS}$ & Number of cooperative satellites & 
$N_\text{IS}$ & Number of interfering satellites \\
$G_\text{ml}$ & Satellite main-lobe gain & 
$G_\text{sl}$ & Satellite side-lobe gain \\
$\alpha$ & Path-loss coefficient & 
${L}_{0}$ & Path-loss on the reference distance \\
$m$ & Shape parameter of Nakagami distribution & 
$\Omega$ & Scale parameter of Nakagami distribution \\
$\kappa$ & Shape parameter of Gamma distribution & 
$\beta$ & Inverse scale parameter of Gamma distribution \\
$\varpi$ & Error propagation factor & 
$\omega_u$ & Additive white Gaussian noise at UT$_u$ \\
$P$ & Satellite transmit power & 
${\xi}_{u(,i)}$ & Power allocation coefficient for UT$_u$ (served by $i$-th cooperative satellite) \\
$f_c$ & Carrier frequency & 
$c$ & Speed of light \\
$q_i$ & Transmitted data symbol from $i$-th satellite &
$\{\cdot\}_{u,i}$ & Certain item between $u$-th UT and $i$-th satellite \\
${r}_{u,i}$ & Distance between $u$-th UT and $i$-th satellite & 
${g}_{u,i}$ & Small-scale fading coefficient \\
${\hat{h}}_{u,i}$ & Channel coefficient & 
${h}_{u,i}$ & Channel coefficient without ${L}_{0}$ \\
${{I}^{\text{inter}}}$ & Aggregated inter-satellite interference & 
${{I}^{\text{intra}}}$ & Aggregated intra-satellite interference \\
${\sigma }^{2}$ & Noise power & 
$\eta$ & Service region angle \\
$\varphi_{\text{d}}$ & Dome angle & 
$\varphi_{\text{z}}$ & Zenith angle \\
S & Point for a randomly-selected satellite & 
P & Projection point of S \\
O$_\text{T}$ & Point $(0,0,R_{\text{E}})$ & 
$\theta$ & Angle $\angle$UT$_{1}$-O$_\text{T}$-S \\
$r_{\text{N}}$ & Distance between point O$_\text{T}$ and S & 
${r}_{\text{SU}}$ & Distance between point S and UT$_1$ \\
${r}_{\text{T}}$ & Distance between point O$_\text{T}$ and UT$_1$ & 
$\zeta$ & Angle b/W $l_{\text{O}_\text{T}-\text{UT}_1}$ and projected line of $l_{\text{O}_\text{T}-\text{S}}$  \\
\hline
\hline
\end{tabular}
\vspace{-4mm}
\end{table*}

\section{System Model}

\subsection{Network Model}

Globally, LEO satellites are distributed around the Earth at the same altitude $H_\text{S}$ on a spherical shell surface according to a homogeneous PPP with a density $\lambda_\text{S}$. 
While practical satellite constellations follow deterministic orbital structures, the PPP-based model provides a tractable abstraction for characterizing the spatial statistics and interference behavior of large-scale dense LEO networks\footnote{Modeling satellites as a PPP has been widely adopted in stochastic geometry analyses of satellite networks \cite{park2022tractable,jia2022analytic,lee2025analyzing,wang2025modeling}. In this paper, the analytical results from the PPP-based model are further validated through simulations based on the Starlink constellation. Moreover, to follow deterministic orbital structures, current research is exploring the Cox point process (CPP) to approximate the LEO satellite constellation \cite{choi2024novel}, which can be left as future work for the CoMS-NOMA analysis.}.
Denote the radius of Earth as $R_\text{E}$, and the revolution radius of the satellites is calculated as $R_\text{S} = R_\text{E} + H_\text{S}$. 
Assume a typical circular serving area covered by satellites, within which a total of $K$ active UTs are uniformly distributed. While each satellite has its own nominal coverage region, this assumption captures the common practice of beam overlap in cooperative LEO constellations, where multiple satellites jointly serve the same coverage area.
Specifically, only a subset of UTs within the coverage region are scheduled for CoMS-NOMA transmission on a given resource block. As the service region angle or UT density increases, more candidate UTs coexist within the cooperative serving region, which reduces the scheduling probability of each target UT.

Denote the set of cooperative satellites for the UTs by ${{\mathcal{S}}_{\text{CS}}}$, and the set of interfering satellites by ${{\mathcal{I}}_{\text{IS}}}$.
Consider a downlink coordinated LEO satellite system, as shown in Fig. \ref{fig:system_model}. 
A number of $N_\text{CS} = \left| {{\mathcal{S}}_{\text{CS}}} \right|$ cooperative LEO satellites within the service region are coordinated to serve $K$ UTs by NOMA transmission, while a number of $N_\text{IS} = \left| {{\mathcal{I}}_{\text{IS}}} \right|$ remaining satellites are interfering satellites.
Note that serving satellites will not only deliver desired signals but also introduce intra-satellite interference, while interfering satellites lead to inter-satellite interference.

\captionsetup{font={scriptsize}}
\begin{figure}[h]
\begin{center}
\setlength{\abovecaptionskip}{+0.1cm}
\setlength{\belowcaptionskip}{-0.0cm}
\centering
  \includegraphics[width=3.3in, height=2.2in]{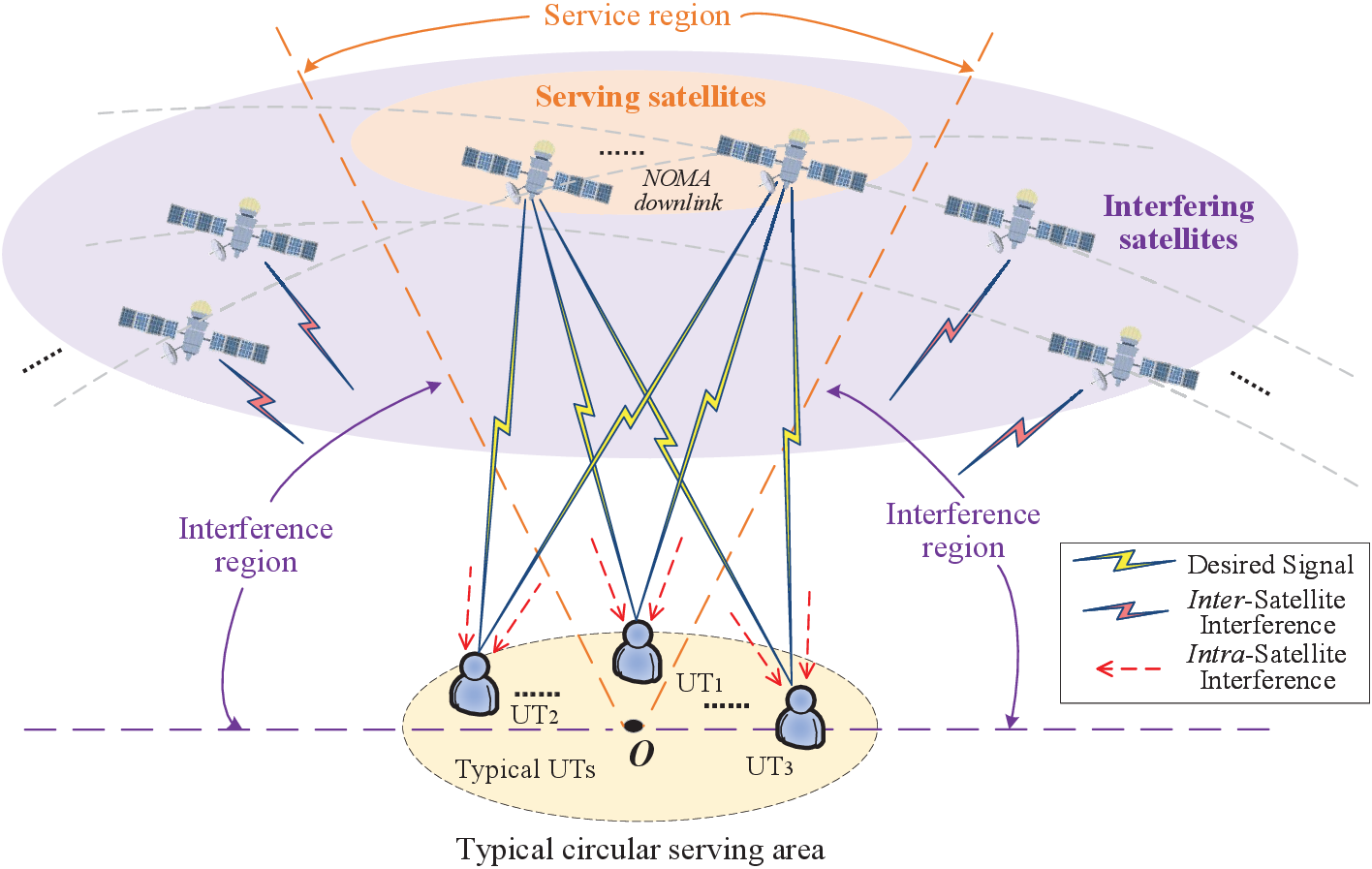}
\renewcommand\figurename{Fig.}
\caption{\scriptsize An illustration of the CoMS-NOMA network.}
\label{fig:system_model}
\end{center}
\vspace{-10mm}
\end{figure}

\subsection{Channel Model}
Channels between satellites and UTs experience both large-scale and small-scale fading. Specifically, the channel coefficient between the $i$-th satellite and the $u$-th UT, denoted as UT$_u$, can be written as
\begin{equation}
    {{\hat{h}}_{u,i}}
    = {{g}_{u,i}} L{{\left( {{r}_{u,i}} \right)}^{1/2}},
\end{equation}
where ${{g}_{u,i}}$ is the small-scale fading coefficient, and $L\left( {{r}_{u,i}} \right)={{L}_{0}}{{r}_{u,i}}^{-\alpha }$ represents the large-scale fading. $\alpha$ is the path-loss coefficient and ${{L}_{0}}={{\left( \frac{c}{4\pi {{f}_{c}}} \right)}^{2}}$ is the path-loss at a reference distance, with $c$ being the speed of light and $f_c$ being the carrier frequency.

To reflect the influence of line-of-sight (LoS) and non-line-of-sight (NLoS) components, the small-scale fading is assumed to follow an independent and identically distributed (i.i.d.) Nakagami-$m$ distribution\footnote{Note that while the Shadowed-Rician (SR) model has been established as a more accurate statistical characterization over Nakagami-$m$ fading for modeling satellite channels, its analytical tractability is limited by a complex PDF. This complexity often precludes the derivation of closed-form expressions, particularly when interference is taken into account.
Since it is shown in \cite{abdi2003new} that SR fading can be approximated by a Gamma variable through moment matching, the suboptimal yet mathematically tractable Nakagami-$m$ model remains a viable alternative to gain theoretical insights \cite{yang2025performance}.}.
Then, the probability density function (PDF) of $|{g}_{u,i}|$ is given by
\begin{equation}
    {{f}_{\left| {{g}_{u,i}} \right|}}\left( x \right)
    =\frac{2{{m}^{m}}}{\Gamma \left( m \right){{\Omega }^{m}}}{{x}^{2m-1}}{{e}^{-\frac{m}{\Omega }{{x}^{2}}}},
\end{equation}
where $x \geq 0$, $\Omega=\mathbb{E}\left\{\left|{g}_{u,i}\right|^2\right\}=1$, and the Gamma function $\Gamma(m)=(m-1)!$ for the integer $m>0$.
Consequently, $|{g}_{u,i}|^2$ can be regarded as a Gamma random variable, whose PDF is expressed as
\begin{equation}
    f_{\left|{{g}_{u,i}}\right|^2} \left( x \right) 
    = \frac{{{x}^{\kappa -1}}{{e}^{-\beta x}}{{\beta }^{\kappa }}}{\Gamma \left( \kappa  \right)}, 
\end{equation}
where $\kappa$ is a shape parameter and $\beta$ is an inverse scale parameter. Due to their square-law relationship \cite{nakagami1960the,holm2004sum}, $m=\kappa=\beta$. In the following, $m=\kappa$ and $\beta$ are written separately to capture their exact position.
It is worth noting that LEO satellite links typically experience large propagation delays and pronounced Doppler shifts due to the high relative velocity between satellites and UTs. In practical satellite communications, these effects are handled through physical-layer synchronization procedures. In particular, the propagation delay can be compensated via timing advance estimation mechanisms \cite{zhu2023timing} as also defined in the Third Generation Partnership Project (3GPP) standards, while Doppler shifts can be compensated through Doppler pre-compensation and carrier frequency offset (CFO) estimation and tracking \cite{yeh2024efficient} based on satellite ephemeris or GNSS assistance. After such synchronization and compensation, the residual delay and Doppler effects become negligible compared with large-scale path loss and small-scale fading, and the above channel modeling can be adopted for system-level performance analysis.

\subsection{CoMS-NOMA Signal}
In the following, we first present the expression of the aggregated signal received by multiple UTs from an SglS under NOMA transmission and then extend the formulation to the scenario with multiple cooperative satellites.

\subsubsection{SglS-NOMA Transmission}
For any one of the $i$-th cooperative satellite in ${{\mathcal{S}}_{\text{CS}}}$, the aggregated signal received at the $u$-th UT, denoted as UT$_u$, of the message intended for the $v$-th UT, denoted as UT$_v$, with $u\leq v\leq K$, is written as \cite{li2026coverage}
\begin{equation}
\begin{aligned}
    y_{u}^{v,i} 
    = & {\sqrt{{{G}_{\text{ml}}}}\sqrt{P{{\xi }_{u,i}}} {{\hat{h}}_{u,i}}{{q}_{i}}} \\
    & + {\sqrt{{{G}_{\text{ml}}}}\left( \sum\limits_{{{t}_{1}}=1}^{v-1}{\sqrt{P{{\xi }_{{{t}_{1}},i}}}}+\varpi \sum\limits_{{{t}_{2}}=v+1}^{K}{\sqrt{P{{\xi }_{{{t}_{2}},i}}}} \right) {{\hat{h}}_{u,i}}{{q}_{i}}} \\
    & + \sqrt{{{G}_{\text{sl}}}}\sum\limits_{j\in {{\mathcal{I}}_{\text{IS}}}}{\sqrt{P}{{\hat{h}}_{u,j}}{{q}_{j}}}+{{\omega }_{u}},
\end{aligned}
\end{equation}
where $P$ is the transmit power of the satellite, $G_\text{ml}$ and $G_\text{sl}$ are the main-lobe gain and the side-lobe gain, respectively, and $\varpi$ is the error propagation factor when applying NOMA. ${{\xi }_{u,i}}$ is the power allocation factor for UT$_u$ served by the $i$-th cooperative satellite, ${{\xi }_{u,i}} \in [0,1]$, $q_i$ is the transmitted data symbol from the $i$-th satellite, ${{\left| {{q}_{j}} \right|}^{2}}=1$, and $\omega_u$ is the additive white Gaussian noise (AWGN) at the UT receiver.

\begin{remark}
(Core Principle of SIC): The core principle of SIC is an iterative cycle of decoding, reconstructing, and subtracting (DRS). The process begins by treating the received composite signal as consisting of the strongest UT signal and interference from other sources. After this, the strongest signal is successfully decoded and re-encoded using estimates of the UT's channel and constellation. The reconstructed signal is then subtracted from the aggregate signal, which reduces interference for the subsequent decoding round targeting the next strongest user \cite{saito2013system}. 
However, since SIC is not perfect in practical NOMA operations, we adopt an error propagation factor $\varpi \in [0,1]$ to show the decoding conditions in a manner similar to \cite{sun2016non}. 
Particularly, $\varpi = 0$ represents the perfect decoding condition in which perfect SIC is assumed, while $\varpi = 1$ corresponds to the worst-case decoding condition with no SIC at all.
\end{remark}

Then, for an $i$-th cooperative satellite, the signal-to-interference-plus-noise ratio (SINR) at UT$_u$ of the message intended for UT$_v$ is expressed as \eqref{Formula_SlgS_NOMA_SINR_1} at the top of the page, where ${\sigma }^{2}$ is the noise power.
\begin{figure*}[!ht]
\setlength{\abovecaptionskip}{-1.5cm}
\setlength{\belowcaptionskip}{-0.5cm}
\normalsize
\begin{equation}
\begin{aligned}
    \text{SINR}_{u}^{v,i}
     & = \frac{{{G}_{\text{ml}}}P{{\xi }_{v,i}}{{\left| {{{\hat{h}}}_{u,i}} \right|}^{2}}}{{{G}_{\text{ml}}}P{{\left| {{{\hat{h}}}_{u,i}} \right|}^{2}}\left( \sum\limits_{{{t}_{1}}=1}^{v-1}{{{\xi }_{{{t}_{1}},i}}}+\varpi \sum\limits_{{{t}_{2}}=v+1}^{K}{{{\xi }_{{{t}_{2}},i}}} \right)+{{G}_{\text{sl}}} \sum\limits_{j\in {{\mathcal{I}}_{\text{IS}}}} {P{{\left| {{{\hat{h}}}_{u,j}} \right|}^{2}}}+{{\sigma }^{2}}},
\end{aligned}
\label{Formula_SlgS_NOMA_SINR_1}
\end{equation}
\hrulefill
\vspace{-4mm}
\end{figure*}
After mathematical reduction, \eqref{Formula_SlgS_NOMA_SINR_1} can be represented as 
\begin{equation}
\begin{aligned}
    \text{SINR}_{u}^{v,i}
     & = \frac{{{\xi }_{v,i}}{{\left| {{h}_{u,i}} \right|}^{2}}}{ I_{i}^{\text{intra}} + \sum\limits_{j\in {{\mathcal{I}}_{\text{IS}}}} I_{j}^{\text{inter}} + {{{\bar{\sigma }}}^{2}} },
\end{aligned}
\label{Formula_SlgS_NOMA_SINR_2}
\end{equation}
where ${{h}_{u,i}}={{r}_{u,i}}^{-\alpha /2}{{g}_{u,i}}$, and ${{\bar{\sigma }}^{2}}=\frac{{{\sigma }^{2}}}{{{G}_{\text{ml}}}P{{L}_{0}}}$. 
For ease of presentation, the term regarding intra-satellite interference from the $i$-th satellite, i.e., $I_{i}^{\text{intra}}$, $i \in {{\mathcal{S}}_\text{CS}}$, is denoted as $I_{i}^{\text{intra}} = {{\left| {{h}_{u,i}} \right|}^{2}}\left( \sum\limits_{{{m}_{1}}=1}^{v-1}{{{\xi }_{{{m}_{1}},i}}}+\varpi \sum\limits_{{{m}_{2}}=v+1}^{K}{{{\xi }_{{{m}_{2}},i}}} \right)$; 
the term regarding inter-satellite interference from the $j$-th satellite, i.e., $I_{j}^{\text{inter}}$, $j \in {{\mathcal{I}}_\text{IS}}$, is denoted as $I_{j}^{\text{inter}} = \frac{{{G}_{\text{sl}}}}{{{G}_{\text{ml}}}}{{\left| {{h}_{u,j}} \right|}^{2}}$.
It is shown that $\text{SINR}_{u}^{v,i}$ is almost independent of $P$ due to the relatively small noise power, while being highly dependent on ${{\xi }_{u,i}}$.

\subsubsection{CoMS-NOMA Transmission}
Considering multiple cooperative satellites, the aggregated signal received at UT$_u$ of the message intended for UT$_v$ is written as
\begin{equation}
\begin{aligned}
    y_{u}^{v} 
    = & \sum\limits_{i\in {{\mathcal{S}}_{\text{CS}}}}{\sqrt{{{G}_{\text{ml}}}}\sqrt{P{{\xi }_{u,i}}}{{\hat{h}}_{u,i}}{{q}_{i}}} \\
    & + \sum\limits_{i\in {{\mathcal{S}}_{\text{CS}}}}{\sqrt{{{G}_{\text{ml}}}}\left( \sum\limits_{{{t}_{1}}=1}^{v-1}{\sqrt{P{{\xi }_{{{t}_{1}},i}}}}+\varpi \sum\limits_{{{t}_{2}}=v+1}^{K}{\sqrt{P{{\xi }_{{{t}_{2}},i}}}} \right){{\hat{h}}_{u,i}}{{q}_{i}}} \\
    & + \sqrt{{{G}_{\text{sl}}}}\sum\limits_{j\in {{\mathcal{I}}_{\text{IS}}}}{\sqrt{P}{{\hat{h}}_{u,j}}{{q}_{j}}}+{{\omega }_{u}}.
\end{aligned}
\end{equation}
Then, the SINR at UT$_u$ of the message intended for UT$_v$ is written as
\begin{equation}
\begin{aligned}
    \text{SINR}_{u}^{v}
    & = \frac{\sum\limits_{i\in {{\mathcal{S}}_{\text{CS}}}}{{{\left| {{h}_{u,i}} \right|}^{2}}{{\xi }_{v,i}}}}{\sum\limits_{i\in {{\mathcal{S}}_{\text{CS}}}}{I_{i}^{\text{intra}}}+\sum\limits_{j\in {{\mathcal{I}}_{\text{IS}}}}{I_{j}^{\text{inter}}}+{{{\bar{\sigma }}}^{2}}}.
\end{aligned}
\label{Formula_CoMS_NOMA_SINR}
\end{equation}
For simplicity of presentation in the following sections, we also denote the aggregated intra- and inter-satellite interference as ${{I}^{\text{intra}}}=\sum\limits_{i\in {{\mathcal{S}}_{\text{CS}}}}{I_{i}^{\text{intra}}}$ and ${{I}^{\text{inter}}}=\sum\limits_{j\in {{\mathcal{I}}_{\text{IS}}}}{I_{j}^{\text{inter}}}$, respectively.

Herein, NC-JT is applied to realize JT, where cooperative LEO satellites transmit identical data to the NOMA UT without prior phase mismatch correction or tight synchronization \cite{sun2024tochastic}. Consequently, useful signals are combined at the receiver, yielding a power boost known as cyclic delay diversity (CDD) \cite{sun2024on}. 
While strict phase alignment is not performed in NC-JT, independently received signal components from cooperative satellites still contribute to aggregated received power in a non-coherent manner. Therefore, NC-JT can still achieve cooperative diversity and signal enhancement gains while avoiding high synchronization overhead by coherent joint transmission in satellite networks.
More details on the CDD mechanism are available in \cite[Appendix A]{tanbourgi2014tractable}.
Following similar steps as in \cite{tanbourgi2014tractable}, the SINR expression in \eqref{Formula_CoMS_NOMA_SINR} is therefore obtained.

\begin{remark}
(Practical Network-Level Coordination): In practical implementations, CoMS transmission can be enabled through network-level coordination. A central controller, which may be implemented at a ground gateway or higher-layer satellite, e.g., LEO/MEO/GEO satellite, first collects network state information, including satellite visibility, user locations, traffic demands, and channel quality indicators.
On this basis, the controller performs cooperative satellite selection, user scheduling, NOMA grouping, and power allocation design. The resulting coordination information is then distributed to participating LEO satellites via feeder links, inter-satellite links (ISLs) \cite{song2025neuromorphic}, or optical ISLs \cite{shang2025channel} with low-latency forwarding capability. 
To support cooperative transmission, synchronization and scheduling information are periodically exchanged among cooperative satellites. The considered NC-JT framework relaxes the synchronization requirement by allowing satellites to transmit identical data streams without strict phase alignment. Consequently, only scheduling information, user grouping decisions, transmission timing, and power allocation coefficients need to be coordinated among satellites, which significantly reduces signaling complexity and improves implementation feasibility for LEO satellite systems.
\end{remark}

\section{Angle and Distance Distributions}
Given connections between multiple cooperative satellites and multiple served UTs, expressions for coverage probability and spectral efficiency involve the statistical distributions of relative distances and angles.
In addition, some other relevant results have been readily obtained in  \cite{shang2024multi,li2026downlink}, so they are provided directly in the following Subsections C and D.

\subsection{Zenith Angle of Random Satellite}
From point $[0,0,R_{\text{E}}]$, the zenith angle $\varphi_{\text{z}}$ and service region angle $\eta$ are marked as shown in Fig. \ref{fig:Angle}, and the dome angle $\varphi_{\text{d}}$ is measured from the center of the Earth pointing to the zenith and dome edge \cite{al2021analytic}.

\captionsetup{font={scriptsize}}
\begin{figure}[tp]
\begin{center}
\setlength{\abovecaptionskip}{+0.2cm}
\setlength{\belowcaptionskip}{-0.2cm}
\centering
  \includegraphics[width=3.0in, height=2.3in]{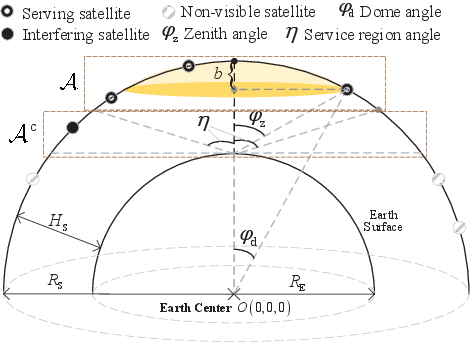}
\renewcommand\figurename{Fig.}
\caption{\scriptsize An illustration of angular relationships and satellite categorization. Based on their locations, satellites can be divided into serving satellites, interfering satellites, and remaining non-visible satellites. $\eta$ is the service region angle, while $\varphi_{\text{d}}$ and $\varphi_{\text{z}}$ are the dome angle and zenith angle, respectively.}
\label{fig:Angle}
\end{center}
\vspace{-5mm}
\end{figure}

\captionsetup{font={scriptsize}}
\begin{figure}[tp]
\begin{center}
\setlength{\abovecaptionskip}{+0.2cm}
\setlength{\belowcaptionskip}{-0.0cm}
\centering
  \includegraphics[width=3.0in, height=2.3in]{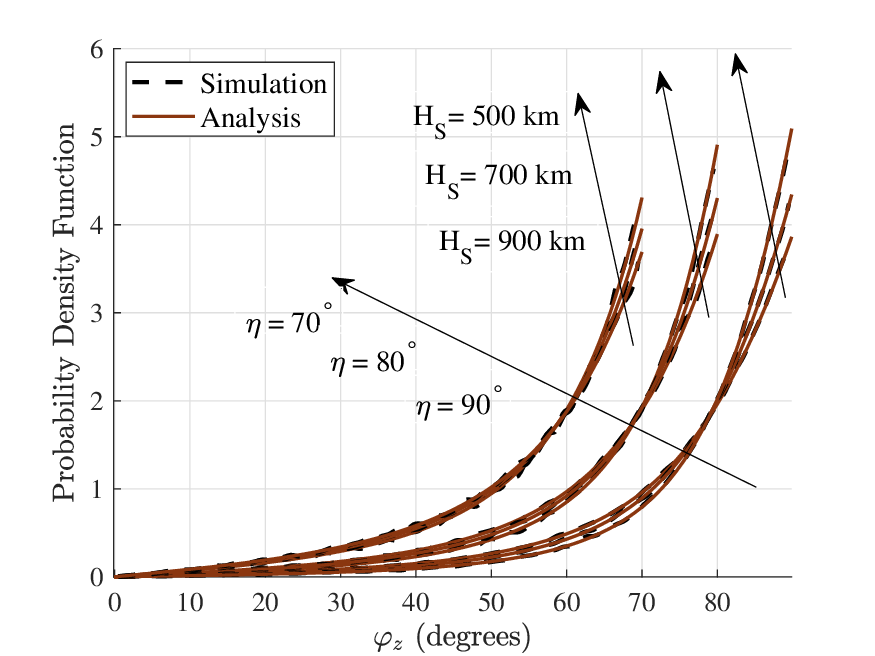}
\renewcommand\figurename{Fig.}
\caption{PDF of $\varphi_{\text{z}}$ under different service region angles $\eta$ and satellite altitudes $H_{\text{S}}$.}
\label{Exp_fig3}
\end{center}
\vspace{-8mm}
\end{figure}

\begin{proposition}
The PDF of the zenith angle $\varphi_z$ from a random satellite is written as 
\begin{equation}
    {{f}_{{{\varphi }_{\text{z}}}}}\left( {{\varphi }_{\text{z}}} \right)
    = \frac{{{\sin }^{3}}\left( \psi \left( {{\varphi }_{\text{z}}} \right) \right)\left[ 1+\frac{\kappa \cot {{\varphi }_{\text{z}}}}{\sqrt{1+{{\cot }^{2}}{{\varphi }_{\text{z}}}-{{\kappa }^{2}}}} \right]}{\left[ 1-\cos \left( \psi \left( \eta  \right) \right) \right]\left( 1-{{\kappa }^{2}} \right){{\sin }^{2}}{{\varphi }_{\text{z}}}},
\label{Formula:PDF_varphi_z}
\end{equation}
where $\psi \left( x \right)={{\cot }^{-1}}\left( \frac{\cot x+\sqrt{{{\kappa }^{2}}\left( 1+{{\cot }^{2}}x-{{\kappa }^{2}} \right)}}{1-{{\kappa }^{2}}} \right)$, and $\kappa =\frac{{{R}_{\text{E}}}}{{{R}_{\text{E}}}+{{H}_{\text{S}}}}$.
\end{proposition}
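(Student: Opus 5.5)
The plan is to get the zenith-angle distribution from the dome-angle distribution. For a homogeneous PPP on the sphere of radius $R_\text{S}$, conditioned on a satellite lying in the spherical cap visible within the service region, the satellite is uniformly distributed over that cap by area. The zenith angle $\varphi_\text{z}$ measured at O$_\text{T}$ is a monotone function of the dome angle $\varphi_\text{d}$ measured at the Earth's center. So I will first write the CDF of $\varphi_\text{d}$, then change variables to $\varphi_\text{z}$, and finally differentiate.

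\textbf{Step 1 (geometry).} Consider the triangle formed by the Earth center C, the point O$_\text{T}$ and the satellite S. It has $|\text{CO}_\text{T}|=R_\text{E}$ and $|\text{CS}|=R_\text{S}$. The angle at C is $\varphi_\text{d}$, the angle at O$_\text{T}$ is $\pi-\varphi_\text{z}$, and hence the angle at S is $\varphi_\text{z}-\varphi_\text{d}$. The law of sines then gives
\begin{equation*}
\sin(\varphi_\text{z}-\varphi_\text{d})=\kappa\sin\varphi_\text{z}.
\end{equation*}
Expanding the left side, dividing by $\sin\varphi_\text{z}\sin\varphi_\text{d}$, and using $1/\sin^2\varphi_\text{d}=1+\cot^2\varphi_\text{d}$ yields a quadratic in $\cot\varphi_\text{d}$:
\begin{equation*}
(1-\kappa^2)\cot^2\varphi_\text{d}-2\cot\varphi_\text{z}\cot\varphi_\text{d}+\cot^2\varphi_\text{z}-\kappa^2=0.
\end{equation*}
I select the root that satisfies $\varphi_\text{d}\to0$ as $\varphi_\text{z}\to0$, which is the plus sign. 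This gives $\varphi_\text{d}=\psi(\varphi_\text{z})$ with $\psi$ exactly as in the statement. It also shows that the service-region boundary $\varphi_\text{z}=\eta$ corresponds to the maximal dome angle $\psi(\eta)$.

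\textbf{Step 2 (distribution).} The area of a cap of half-angle $\phi$ on the orbital sphere is $2\pi R_\text{S}^2(1-\cos\phi)$. Uniformity over the cap therefore gives
\begin{equation*}
F_{\varphi_\text{d}}(\phi)=\frac{1-\cos\phi}{1-\cos\psi(\eta)}, \qquad 0\le\phi\le\psi(\eta).
\end{equation*}
Since $\psi$ is increasing, $F_{\varphi_\text{z}}(x)=F_{\varphi_\text{d}}(\psi(x))$. Differentiating gives
\begin{equation*}
f_{\varphi_\text{z}}(x)=\frac{\sin\psi(x)\,\psi'(x)}{1-\cos\psi(\eta)}.
\end{equation*}

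\textbf{Step 3 (derivative).} Write $\cot\psi(x)=g(x)$. Then $\psi'(x)=-g'(x)\sin^2\psi(x)$. A direct computation, using $\frac{d}{dx}\cot x=-\csc^2x$ and pulling $\kappa$ out of the square root, gives
\begin{equation*}
g'(x)=-\frac{\csc^2x}{1-\kappa^2}\left[1+\frac{\kappa\cot x}{\sqrt{1+\cot^2x-\kappa^2}}\right].
\end{equation*}
Substituting this into Step 2 produces the factor $\sin^3\psi$ together with the bracket and the denominator $(1-\kappa^2)\sin^2\varphi_\text{z}$, which is precisely the statement.

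The main obstacle is Step 1: choosing the correct root and checking that $\psi$ is monotone on $[0,\eta]$, so that the CDF transformation is valid without absolute values. I will verify both through the limiting behavior as $\varphi_\text{z}\to0$ and by noting that the bracket in $g'$ is positive for $\varphi_\text{z}<\pi/2$. The remaining steps are routine calculus.
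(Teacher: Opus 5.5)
Your proposal is correct and follows the same route as the paper's Appendix A. Conditional uniformity of the PPP over the cap gives the dome-angle law $\sin\varphi_\text{d}/(1-\cos\psi(\eta))$; you obtain it via the CDF, the paper via the joint density in spherical coordinates. Both then change variables through $\varphi_\text{d}=\psi(\varphi_\text{z})$ and compute $\psi'$ by the chain rule on $\cot\psi=g$. The one real difference is that you derive $\psi$ from the law of sines in triangle C--O$_\text{T}$--S, whereas the paper imports it from the literature. This makes your proof self-contained, and your monotonicity check justifies dropping the absolute value that the paper keeps without comment.

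One correction in that extra step: the criterion ``$\varphi_\text{d}\to0$ as $\varphi_\text{z}\to0$'' does not single out the plus sign. The minus root also blows up, with $\cot\varphi_\text{d}\approx\cot\varphi_\text{z}/(1+\kappa)$ as $\varphi_\text{z}\to0$. Select the root instead from the unsquared relation $\cot\varphi_\text{d}-\cot\varphi_\text{z}=\kappa\csc\varphi_\text{d}>0$. Since $\kappa^2\cot^2\varphi_\text{z}<1+\cot^2\varphi_\text{z}-\kappa^2$ always holds, the plus root satisfies this inequality and the minus root violates it. Equivalently, compare first-order slopes: the plus root gives $\varphi_\text{d}\approx(1-\kappa)\varphi_\text{z}=\frac{H_\text{S}}{R_\text{S}}\varphi_\text{z}$, which matches the true small-angle geometry, while the minus root gives $(1+\kappa)\varphi_\text{z}$.
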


\begin{proof}
    Please see Appendix A.
\end{proof}

Fig. \ref{Exp_fig3} compares the PDF of $\varphi_{\text{z}}$ under three example service region angles $\eta=70^{\circ},80^{\circ},90^{\circ}$ and three example satellite altitudes $H_{\text{S}}=500{\,}{\text{km}}, 700{\,}{\text{km}}, 900{\,}{\text{km}}$, respectively. Thus, the correctness of the PDF derived from $\varphi_{\text{z}}$ is verified.

\vspace{-4mm}
\subsection{Distance from Random Satellite to Random UT}
As shown in Fig. \ref{fig:User_dist}, $\theta$ is the angle between the line from a random satellite S to point O$_\text{T}$, i.e., $(0,0,R_{\text{E}})$, (line O$_\text{T}$-S) and the line from a random UT$_1$ to point O$_\text{T}$ (line O$_\text{T}$-UT$_1$). There exists a minimum $\theta_{\text{min}}$ and a maximum $\theta_{\text{max}}$ for $\theta$ based on their locations.
$\zeta$ is the angle between line O$_\text{T}$-UT$_1$ and line O$_\text{T}$-S projected onto the area.
In addition, $r_{\text{N}}$ is the distance from point O$_\text{T}$ to a random satellite S, and ${r}_{\text{SU}}$ is the distance between satellite S and UT$_1$.

\captionsetup{font={scriptsize}}
\begin{figure}[tp]
\begin{center}
\setlength{\abovecaptionskip}{+0.2cm}
\setlength{\belowcaptionskip}{-0.2cm}
\centering
  \includegraphics[width=3.2in, height=2.0in]{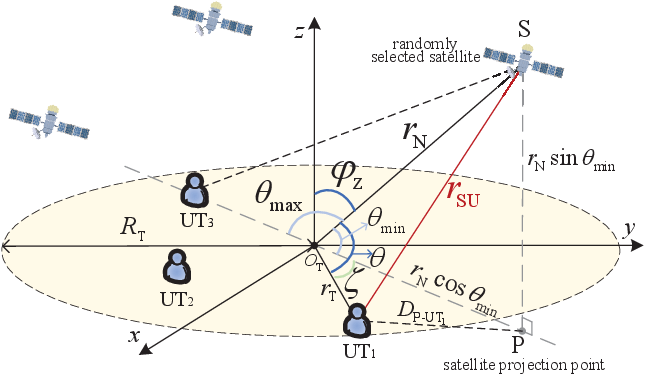}
\renewcommand\figurename{Fig.}
\caption{\scriptsize The angular and distance relationships, focusing on the typical serving area.}
\label{fig:User_dist}
\end{center}
\vspace{-3mm}
\end{figure}

\captionsetup{font={scriptsize}}
\begin{figure}[tp]
\begin{center}
\setlength{\abovecaptionskip}{+0.2cm}
\setlength{\belowcaptionskip}{-0.0cm}
\centering
  \includegraphics[width=3.2in, height=1.9in]{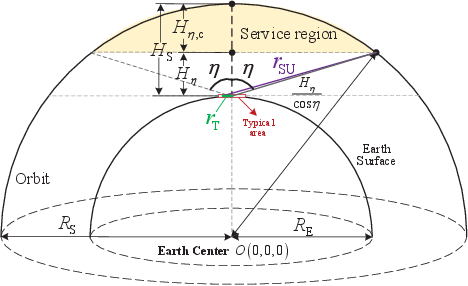}
\renewcommand\figurename{Fig.}
\caption{\scriptsize An illustration of angular and distance relationships, where $r_{\text{T}}$ is the distance from a random UT$_u$ to the center of typical area, $r_{\text{SU}}$ is the distance from a random satellite to a random UT$_u$, and $\eta$ is the service region angle.}
\label{fig:Spherical_cap}
\end{center}
\vspace{-8mm}
\end{figure}

\begin{lemma}
The relationships among $\theta$, ${{\varphi }_{\text{z}}}$, and $\zeta$ is represented as 
\begin{equation}
    \cos \theta =\sin {{\varphi }_{\text{z}}}\cos \zeta.
\end{equation}
\end{lemma}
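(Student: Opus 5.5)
Place a local Cartesian frame at O$_\text{T}=(0,0,R_{\text{E}})$, with the $z$-axis along the local zenith (the outward radial direction from the Earth's center) and the $x$–$y$ plane equal to the tangent plane that contains the typical serving area. The serving area is small compared with $R_{\text{E}}$, as the geometry of Fig.~\ref{fig:User_dist} already assumes. So we treat the UT as lying in this plane, and orient the $x$-axis along the direction O$_\text{T}$-UT$_1$. The unit vector from O$_\text{T}$ toward UT$_1$ is then $\mathbf{e}_{\text{U}}=(1,0,0)$, whatever the distance $r_{\text{T}}$.

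Next write the unit vector from O$_\text{T}$ toward the satellite S in spherical coordinates relative to this frame. The zenith angle $\varphi_{\text{z}}$ is the angle between O$_\text{T}$-S and the $z$-axis. Projecting O$_\text{T}$-S onto the tangent plane gives a direction at angle $\zeta$ from the $x$-axis, which is exactly the definition of $\zeta$ in Table~\ref{symbol}. Hence
\begin{equation*}
\mathbf{e}_{\text{S}}=\left(\sin\varphi_{\text{z}}\cos\zeta,\ \sin\varphi_{\text{z}}\sin\zeta,\ \cos\varphi_{\text{z}}\right),
\end{equation*}
with $r_{\text{N}}=\|\text{S}-\text{O}_\text{T}\|$ so that $\text{S}=\text{O}_\text{T}+r_{\text{N}}\mathbf{e}_{\text{S}}$.

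Since $\theta$ is the angle between lines O$_\text{T}$-S and O$_\text{T}$-UT$_1$, we have $\cos\theta=\mathbf{e}_{\text{S}}\cdot\mathbf{e}_{\text{U}}$. Taking the inner product gives $\sin\varphi_{\text{z}}\cos\zeta$ directly, which is the claimed identity. Equivalently, this is the spherical law of cosines on the unit sphere centered at O$_\text{T}$, with vertices at the zenith, S, and UT$_1$: the UT side has polar angle $\pi/2$, so the cosine term drops out and only $\sin(\pi/2)\sin\varphi_{\text{z}}\cos\zeta$ remains.

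The only real obstacle is justifying the geometric convention rather than the algebra. First, the ground area must be identified with the tangent plane at O$_\text{T}$, so that O$_\text{T}$-UT$_1$ is horizontal (polar angle exactly $\pi/2$). Second, $\zeta$ must be measured between O$_\text{T}$-UT$_1$ and the orthogonal projection of O$_\text{T}$-S onto that plane. I would state both explicitly, consistent with Figs.~\ref{fig:User_dist} and \ref{fig:Spherical_cap}. If the area were instead taken on the curved Earth surface, the identity would hold only approximately, up to terms of order $r_{\text{T}}/R_{\text{E}}$, and I would remark on this.
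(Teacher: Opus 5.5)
Your argument is correct, but it reaches the identity by a different route than the paper.

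The paper works with lengths rather than directions. It drops the perpendicular from S to its foot P in the serving plane. It then computes $D_{\text{P-UT}_1}$ once by Pythagoras in the right triangle S--P--UT$_1$ (vertical leg $r_{\text{N}}\sin\theta_{\min}$), and once by the law of cosines in the planar triangle $\text{O}_\text{T}$--P--UT$_1$ (horizontal leg $r_{\text{N}}\cos\theta_{\min}$). Equating the two gives $r_{\text{SU}}^2=r_{\text{N}}^2+r_{\text{T}}^2-2r_{\text{N}}r_{\text{T}}\cos\theta_{\min}\cos\zeta$. Two further steps, left inside ``some mathematical manipulations'', then yield the lemma: comparing with the law of cosines in triangle $\text{O}_\text{T}$--S--UT$_1$, and using $\cos\theta_{\min}=\sin\varphi_{\text{z}}$ (the elevation of S seen from $\text{O}_\text{T}$ is $\pi/2-\varphi_{\text{z}}$).

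Your dot product $\mathbf{e}_{\text{S}}\cdot\mathbf{e}_{\text{U}}$ avoids P, $r_{\text{SU}}$ and the auxiliary angle $\theta_{\min}$ entirely. It also shows at once that the identity does not depend on $r_{\text{N}}$ or $r_{\text{T}}$. It makes explicit the two conventions on which the paper's right angle at P silently depends: a flat tangent-plane serving area, and $\zeta$ measured against the orthogonal projection of $\text{O}_\text{T}$--S. Your curvature caveat of order $r_{\text{T}}/R_{\text{E}}$ is also accurate.

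What the paper's length-based route buys is the by-product $r_{\text{SU}}^2=r_{\text{N}}^2+r_{\text{T}}^2-2r_{\text{N}}r_{\text{T}}\sin\varphi_{\text{z}}\cos\zeta$, which is the relation Appendix D uses for Proposition 2. In your framework this is a one-line corollary, $r_{\text{SU}}^2=\|r_{\text{N}}\mathbf{e}_{\text{S}}-r_{\text{T}}\mathbf{e}_{\text{U}}\|^2$, so nothing is lost.
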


\begin{proof}
Consider geometry in two triangles. In a triangle formed by a random satellite S, its projection point P, and UT$_1$, the distance between point P and UT$_1$ is given by
\begin{equation}
    {{D}_{\text{P-U}{{\text{T}}_{\text{1}}}}}=\sqrt{{{r}_{\text{SU}}}^{2}-{{\left( {{r}_{\text{N}}}\sin {{\theta }_{\min }} \right)}^{2}}}.
\end{equation}

In a triangle formed by point O$_\text{T}$, point P, and UT$_1$, we have the distance between point P and UT$_1$ given by
\begin{equation}
    {{D}_{\text{P-U}{{\text{T}}_{\text{1}}}}}=\sqrt{{{r}_{\text{T}}}^{2}+{{\left( {{r}_{\text{N}}}\cos {{\theta }_{\min }} \right)}^{2}}-2{{r}_{\text{T}}}{{r}_{\text{N}}}\cos {{\theta }_{\min }}\cos \zeta },
\end{equation}
where ${r}_{\text{T}}$ is the distance from point O$_\text{T}$ to a random UT$_1$, and $\zeta$ is marked in  Fig. \ref{fig:User_dist}. 
Combining the above equations and performing some mathematical manipulations yields the relationship above, which completes the proof.
\end{proof}

\begin{lemma}
The PDF of angle $\theta$ can be written as
\begin{equation}
    {{f}_{\theta }}\left( \theta  \right)=\sin \theta \int_{0}^{\eta }{\frac{{{f}_{{{\varphi }_{\text{z}}}}}\left( {{\varphi }_{\text{z}}} \right)}{\pi \sqrt{{{\sin }^{2}}{{\varphi }_{\text{z}}}-{{\cos }^{2}}\theta }}d{{\varphi }_{\text{z}}}},
\end{equation}
where ${{f}_{{{\varphi }_{\text{z}}}}}\left( {{\varphi }_{\text{z}}} \right)$ is shown in \eqref{Formula:PDF_varphi_z}.
\end{lemma}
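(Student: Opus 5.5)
Our plan is to derive the PDF of $\theta$ by a change of variables built on the preceding lemma, $\cos\theta=\sin\varphi_{\text{z}}\cos\zeta$. We treat $\varphi_{\text{z}}$ and $\zeta$ as independent random variables. $\varphi_{\text{z}}$ has the PDF given in \eqref{Formula:PDF_varphi_z} and is supported on $[0,\eta]$. Because the UT is placed uniformly in the circular area and the PPP is rotationally symmetric about the axis through O$_\text{T}$, the relative azimuth $\zeta$ between the projected satellite direction and the UT direction is uniform. We will take $\zeta\sim\mathcal{U}[0,\pi]$; by the symmetry $\zeta\mapsto-\zeta$ this loses nothing and produces the constant $1/\pi$. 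We then condition on $\varphi_{\text{z}}$, find the conditional law of $\theta$, and average over $\varphi_{\text{z}}$.

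First, fix $\varphi_{\text{z}}$. The conditional CDF is
\begin{equation*}
F_{\theta|\varphi_{\text{z}}}(\theta)=\Pr\{\arccos(\sin\varphi_{\text{z}}\cos\zeta)\le\theta\}=\Pr\{\cos\zeta\ge \cos\theta/\sin\varphi_{\text{z}}\}=\frac{1}{\pi}\arccos\!\left(\frac{\cos\theta}{\sin\varphi_{\text{z}}}\right).
\end{equation*}
This holds whenever $|\cos\theta|\le\sin\varphi_{\text{z}}$. Outside that range the conditional CDF is $0$ or $1$, which fixes the support $\theta\in[\pi/2-\varphi_{\text{z}},\pi/2+\varphi_{\text{z}}]$; this is consistent with the $\theta_{\min}$ and $\theta_{\max}$ mentioned earlier. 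Differentiating in $\theta$ with $\frac{d}{dx}\arccos x=-1/\sqrt{1-x^2}$ gives
\begin{equation*}
f_{\theta|\varphi_{\text{z}}}(\theta)=\frac{1}{\pi}\frac{\sin\theta/\sin\varphi_{\text{z}}}{\sqrt{1-\cos^2\theta/\sin^2\varphi_{\text{z}}}}=\frac{\sin\theta}{\pi\sqrt{\sin^2\varphi_{\text{z}}-\cos^2\theta}}.
\end{equation*}

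Second, we uncondition: $f_\theta(\theta)=\int_0^\eta f_{\theta|\varphi_{\text{z}}}(\theta)\,f_{\varphi_{\text{z}}}(\varphi_{\text{z}})\,d\varphi_{\text{z}}$. Pulling the factor $\sin\theta$ outside the integral gives exactly the stated expression. The integrand is understood to vanish where $\sin^2\varphi_{\text{z}}<\cos^2\theta$, so the effective lower limit is $|\pi/2-\theta|$. The singularity at that endpoint is of inverse square-root type and therefore integrable.

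The main obstacle is justifying the modelling assumptions rather than the calculus. We must show that $\zeta$ is uniform and independent of $\varphi_{\text{z}}$. The route we would take is to note that, conditioned on its zenith angle, a PPP point has an azimuth that is uniform on $[0,2\pi)$, by isotropy of the homogeneous PPP on the spherical cap about the zenith axis. Independently, the UT's azimuth is uniform because its position is uniform in a disk. The difference of these two angles, reduced modulo $2\pi$ and folded by the symmetry of $\cos$, is then uniform on $[0,\pi]$. A secondary point is that $\varphi_{\text{z}}$ is measured at O$_\text{T}$ rather than at the UT. This is consistent with the preceding lemma, which expresses $\theta$ through angles at O$_\text{T}$, so the PDF of Proposition 1 applies directly.
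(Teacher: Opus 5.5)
Your proof is correct and follows essentially the same route as the paper's Appendix B. Both arguments use $\cos\theta=\sin\varphi_{\text{z}}\cos\zeta$ with $\zeta$ uniform, obtain the conditional law through an $\arccos$, differentiate, and average over $\varphi_{\text{z}}\in[0,\eta]$; the paper simply works with $X_\theta=\cos\theta$ before changing variables to $\theta$, and your folding of $\zeta$ onto $[0,\pi]$ keeps the normalization correct at every step (the paper's intermediate densities built from $\zeta\sim U(0,2\pi)$ carry a factor $1/(2\pi)$ and integrate to $1/2$, with the correct $1/\pi$ appearing only in the final line).
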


\begin{proof}
    Please see Appendix B.
\end{proof}

Note that the intermediate results in the proof of Lemma 2 will be used for the proof of Proposition 2.

\begin{lemma}
The range of distances from a random serving satellite in the spherical cap to a random UT in the serving area is written as
\begin{equation}
    {{H}_{\text{S}}}\le {{r}_{\text{SU}}}\le \sqrt{{{r}_{\text{T}}}^{2}+{{\left( \frac{{{H}_{\eta }}}{\cos \eta } \right)}^{2}}-2{{r}_{\text{T}}}\frac{{{H}_{\eta }}}{\cos \eta } \sin \eta},
\label{Formula:range_r_SU}
\end{equation}
where ${{H}_{\eta }}=\left( \sqrt{{{R}_{\text{E}}}^{2}+\frac{{{R}_{\text{S}}}^{2}-{{R}_{\text{E}}}^{2}}{{{\cos }^{2}}\eta }}-{{R}_{\text{E}}} \right){{\cos }^{2}}\eta$.
\end{lemma}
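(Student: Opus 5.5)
The plan is to write $r_{\text{SU}}$ through the law of cosines in the triangle O$_\text{T}$--S--UT$_1$, and then bound each of the two free quantities, $r_{\text{N}}$ and $\theta$, using the zenith-angle geometry of Section III-A and Lemma 1.

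\emph{Step 1 (distance formula).} In the triangle with vertices O$_\text{T}$, S and UT$_1$, the law of cosines gives
\begin{equation*}
r_{\text{SU}}^2 = r_{\text{T}}^2 + r_{\text{N}}^2 - 2 r_{\text{T}} r_{\text{N}} \cos\theta .
\end{equation*}
This is the same identity obtained by equating the two expressions for $D_{\text{P-UT}_1}$ in the proof of Lemma 1. By Lemma 1, $\cos\theta=\sin\varphi_{\text{z}}\cos\zeta$, so $r_{\text{SU}}$ depends only on $r_{\text{T}}$, $r_{\text{N}}$, $\varphi_{\text{z}}$ and $\zeta$.

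\emph{Step 2 ($r_{\text{N}}$ as a function of $\varphi_{\text{z}}$).} In the triangle formed by the Earth's center, O$_\text{T}$ and S, the sides are $R_\text{E}$, $r_{\text{N}}$ and $R_\text{S}$, and the angle at O$_\text{T}$ is $\pi-\varphi_{\text{z}}$. Hence
\begin{equation*}
R_\text{S}^2=R_\text{E}^2+r_{\text{N}}^2+2R_\text{E}r_{\text{N}}\cos\varphi_{\text{z}},
\end{equation*}
whose positive root is
\begin{equation*}
r_{\text{N}}(\varphi_{\text{z}})=\sqrt{R_\text{E}^2\cos^2\varphi_{\text{z}}+R_\text{S}^2-R_\text{E}^2}-R_\text{E}\cos\varphi_{\text{z}} .
\end{equation*}
This function is increasing on $[0,\eta]$. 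It equals $H_\text{S}$ at the zenith, and at the cap edge a direct rearrangement shows $r_{\text{N}}(\eta)=H_\eta/\cos\eta$ with $H_\eta$ as defined in the statement. I will check this rearrangement explicitly, since it is what produces the factor $H_\eta/\cos\eta$.

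\emph{Step 3 (lower bound).} No satellite in the cap is closer to a ground point than its orbital altitude. The minimum $r_{\text{SU}}=H_\text{S}$ is attained when S lies directly above UT$_1$, which is exact when UT$_1$ sits at O$_\text{T}$ ($r_{\text{T}}=0$, $\varphi_{\text{z}}=0$). Under the local-flat treatment of the serving area used in Lemma 1, this gives $r_{\text{SU}}\ge H_\text{S}$.

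\emph{Step 4 (upper bound) and main obstacle.} For fixed $r_{\text{T}}$, I will maximise $r_{\text{T}}^2+r_{\text{N}}^2-2r_{\text{T}}r_{\text{N}}\cos\theta$ over the admissible pairs $(\varphi_{\text{z}},\zeta)$. Since $r_{\text{N}}\gg r_{\text{T}}$, the $r_{\text{N}}^2$ term dominates, so the extremum is pushed to the cap boundary $\varphi_{\text{z}}=\eta$, where $r_{\text{N}}=H_\eta/\cos\eta$. On that boundary, Lemma 1 gives $\cos\theta=\sin\eta\cos\zeta$. The extremal azimuth puts S, its projection P and UT$_1$ in a common vertical plane through O$_\text{T}$, so $|\cos\zeta|=1$ and $\cos\theta=\sin\eta$ in the paper's convention for $\zeta$. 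Substituting these values yields the stated upper bound.

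The step I expect to be delicate is this last one: justifying which in-plane orientation $\zeta$ is extremal, and reconciling the sign of the cross term with the convention for $\zeta$ in Fig.~\ref{fig:User_dist}. Taken naively, $\zeta=\pi$ would give the larger value $+2r_{\text{T}}r_{\text{N}}\sin\eta$, whereas the statement has the minus sign. I would first fix the convention by checking the two $D_{\text{P-UT}_1}$ expressions in the proof of Lemma 1 against the figure, and then verify monotonicity in $\varphi_{\text{z}}$ by differentiating the squared distance.
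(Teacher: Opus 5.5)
\textbf{The gap is in Step~4, at the substitution $\cos\theta=\sin\eta$.}

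In the paper's convention, $\zeta$ is the angle at $\text{O}_\text{T}$ of the planar triangle $\text{O}_\text{T}$--P--$\text{UT}_1$. This is the law of cosines behind the second expression for $D_{\text{P-UT}_1}$ in the proof of Lemma~1. So $\cos\zeta=+1$ puts P on the ray from $\text{O}_\text{T}$ towards $\text{UT}_1$. At $\varphi_{\text{z}}=\eta$ this is the \emph{nearest} cap-edge point in that vertical plane, not the farthest.

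No re-orientation of $\zeta$ can fix this. The quantity $\theta$ is the true angle entering the law of cosines, and $\cos\theta=\sin\varphi_{\text{z}}\cos\zeta$ sweeps the symmetric interval $[-\sin\varphi_{\text{z}},\sin\varphi_{\text{z}}]$. So the extremal configuration does not depend on convention, and the check against the figure that you plan would only confirm this.

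In fact the displayed right-hand side, with $-2r_{\text{T}}\frac{H_\eta}{\cos\eta}\sin\eta$, is not an upper bound at all. It is the distance to one admissible satellite position, and the mirror-image edge point ($\zeta=\pi$, also in the cap) is strictly farther whenever $r_{\text{T}}\sin\eta>0$. A concrete case: with $H_\text{S}=500$~km, $\eta=35^\circ$, $r_{\text{T}}=200$~km one has $r_{\text{N}}(\eta)\approx 600$~km. The displayed bound ($\approx 512$~km) is then smaller than the distance $\sqrt{r_{\text{T}}^2+H_\text{S}^2}\approx 539$~km to the satellite directly above $\text{O}_\text{T}$.

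Your \emph{naive} $\zeta=\pi$ computation is the right one, and it is exactly what the paper's Appendix~C does. It applies the cosine theorem with angle $\eta+\frac{\pi}{2}$ at $\text{O}_\text{T}$ (S on the cap edge opposite $\text{UT}_1$), writing $R_{\text{SU},\max}^2=r_{\text{T}}^2+\left(\frac{H_\eta}{\cos\eta}\right)^2-2r_{\text{T}}\frac{H_\eta}{\cos\eta}\cos\left(\eta+\frac{\pi}{2}\right)$ with $\cos\left(\eta+\frac{\pi}{2}\right)=-\sin\eta$. That gives a cross term $+2r_{\text{T}}\frac{H_\eta}{\cos\eta}\sin\eta$. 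The minus sign in the lemma as displayed is a sign slip relative to the paper's own derivation. Do not try to force it: prove the bound with the plus sign and flag the discrepancy.

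\textbf{How to close Step~4 rigorously.} Maximise over $\zeta$ first. For each $\varphi_{\text{z}}\in[0,\eta]$,
\[
r_{\text{SU}}^2\le r_{\text{T}}^2+r_{\text{N}}(\varphi_{\text{z}})^2+2r_{\text{T}}r_{\text{N}}(\varphi_{\text{z}})\sin\varphi_{\text{z}},
\]
with equality at $\zeta=\pi$. The right-hand side is increasing in $\varphi_{\text{z}}$, because $r_{\text{N}}(\varphi_{\text{z}})$ (your Step~2) and $\sin\varphi_{\text{z}}$ are both nonnegative and increasing. Hence the maximum sits at $\varphi_{\text{z}}=\eta$, where $r_{\text{N}}=H_\eta/\cos\eta$.

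This replaces the heuristic $r_{\text{N}}\gg r_{\text{T}}$, which is unnecessary and, for the branch you selected, wrong. The function $r_{\text{T}}^2+r_{\text{N}}^2-2r_{\text{T}}r_{\text{N}}\sin\varphi_{\text{z}}$ has derivative $-2r_{\text{T}}H_\text{S}<0$ at $\varphi_{\text{z}}=0$ (since $r_{\text{N}}'(0)=0$). The numbers above show that its maximum over $[0,\eta]$ need not be at the edge.

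Apart from this, your plan follows the paper's route in more detail. Steps~1--2 are correct, including $r_{\text{N}}(\eta)=H_\eta/\cos\eta$; equivalently, $H_\eta=r_{\text{N}}(\eta)\cos\eta$ is the height of a cap-edge satellite above the tangent plane at $\text{O}_\text{T}$, a fact the paper imports from a citation. Your lower bound $H_\text{S}$ is the paper's.
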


\begin{proof}
    Please see Appendix C.
\end{proof}

\begin{proposition}
The conditional PDF of the distance from a random serving satellite to a random UT in the serving area is written as 
\begin{equation}
\begin{aligned}
    {{f}_{{{R}_{\text{SU}}}}}\left( {{r}_{\text{SU}}};{{\varphi }_{\text{z}}},{{r}_{\text{T}}},{{r}_{\text{N}}} \right)=\frac{{{r}_{\text{SU}}}}{2\pi {{r}_{\text{N}}}{{r}_{\text{T}}}\sin {{\varphi }_{\text{z}}}\sqrt{1-{{\left( \frac{{{r}_{\text{N}}}^{2}+{{r}_{\text{T}}}^{2}-{{r}_{\text{SU}}}^{2}}{2{{r}_{\text{N}}}{{r}_{\text{T}}}\sin {{\varphi }_{\text{z}}}} \right)}^{2}}}}.
    \label{Formula:PDF_varphi_R_SU}
\end{aligned}
\end{equation}
\end{proposition}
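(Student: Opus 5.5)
The plan is to treat $r_{\text{SU}}$ as a deterministic function of the single random angle $\zeta$, with $\varphi_{\text{z}}$, $r_{\text{T}}$ and $r_{\text{N}}$ held fixed, and then change variables. Apply the law of cosines in the triangle O$_\text{T}$--S--UT$_1$, where $\theta$ is the angle at O$_\text{T}$ between the sides of lengths $r_{\text{N}}$ and $r_{\text{T}}$:
\begin{equation*}
r_{\text{SU}}^2 = r_{\text{N}}^2 + r_{\text{T}}^2 - 2 r_{\text{N}} r_{\text{T}} \cos\theta .
\end{equation*}
Lemma 1 gives $\cos\theta = \sin\varphi_{\text{z}}\cos\zeta$. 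Hence, conditionally,
\begin{equation*}
r_{\text{SU}}^2 = r_{\text{N}}^2 + r_{\text{T}}^2 - 2 r_{\text{N}} r_{\text{T}}\sin\varphi_{\text{z}}\cos\zeta,
\end{equation*}
which we solve as
\begin{equation*}
\cos\zeta = \frac{r_{\text{N}}^2+r_{\text{T}}^2-r_{\text{SU}}^2}{2r_{\text{N}}r_{\text{T}}\sin\varphi_{\text{z}}}.
\end{equation*}

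Next, fix the law of $\zeta$. The satellite PPP is isotropic about the zenith axis through O$_\text{T}$, and UT$_1$ is uniformly placed in the disk. So the azimuth difference $\zeta$ between the projection of O$_\text{T}$--S and O$_\text{T}$--UT$_1$ is uniform on $[0,2\pi)$, independent of $\varphi_{\text{z}}$, $r_{\text{T}}$ and $r_{\text{N}}$. This is the same fact used in the proof of Lemma 2 (Appendix B), where the factor $1/\pi$ came from the $\zeta$-marginalization, and I would reuse it directly.

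Then carry out the transformation. On $[0,\pi]$ the map $\zeta\mapsto r_{\text{SU}}$ is monotone increasing, and $[\pi,2\pi)$ is its mirror image. Differentiating the squared-distance relation gives
\begin{equation*}
2r_{\text{SU}}\,dr_{\text{SU}} = 2r_{\text{N}}r_{\text{T}}\sin\varphi_{\text{z}}\sin\zeta\,d\zeta ,
\end{equation*}
so that
\begin{equation*}
\left|\frac{d\zeta}{dr_{\text{SU}}}\right| = \frac{r_{\text{SU}}}{r_{\text{N}}r_{\text{T}}\sin\varphi_{\text{z}}\sqrt{1-\cos^2\zeta}} .
\end{equation*}
Substituting $\cos\zeta$ from the first step produces exactly the square-root term in \eqref{Formula:PDF_varphi_R_SU}. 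Multiplying by the density $1/(2\pi)$ of $\zeta$ gives the stated expression. Its support follows from $|\cos\zeta|\le 1$, consistent with Lemma 3.

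The main obstacle is the normalization. Summing the two preimage branches $\zeta$ and $2\pi-\zeta$ would give $1/\pi$ rather than the stated $1/(2\pi)$. I would first check which convention the paper adopts: if $\zeta$ is taken in $[0,\pi]$ with density $1/\pi$, then a single branch gives $1/\pi$ as well. So I need to reconcile the stated constant with how $f_{\theta}$ in Lemma 2 was normalized, presumably by reading the displayed result as a per-branch density that is later integrated over the full angular range. I would also briefly justify the law of cosines with the spatial angle $\theta$ via Lemma 1's geometry, which is a routine step.
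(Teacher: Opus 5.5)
Your change of variables is the paper's argument in different form. Appendix D takes the CDF of $X_\theta=\cos\theta$ from the proof of Lemma 2, composes it with $r_{\text{SU}}^2=r_{\text{N}}^2+r_{\text{T}}^2-2r_{\text{N}}r_{\text{T}}\cos\theta$, and differentiates; you apply the Jacobian in $\zeta$ directly.

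The genuine gap is the one you flagged, and it cannot be closed in favour of the stated constant. With $\zeta$ uniform on $[0,2\pi)$, the map $\zeta\mapsto r_{\text{SU}}$ is two-to-one: $\zeta$ and $2\pi-\zeta$ give the same $r_{\text{SU}}$. A density in $r_{\text{SU}}$ must add both contributions, so the correctly normalised conditional PDF has $\pi$ where the display has $2\pi$. You can check that the displayed expression is not a probability density. Put $\Psi=\frac{r_{\text{N}}^2+r_{\text{T}}^2-r_{\text{SU}}^2}{2r_{\text{N}}r_{\text{T}}\sin\varphi_{\text{z}}}$, so that $r_{\text{SU}}\,dr_{\text{SU}}=-r_{\text{N}}r_{\text{T}}\sin\varphi_{\text{z}}\,d\Psi$. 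Its integral over the support $|\Psi|\le 1$ is then $\frac{1}{2\pi}\int_{-1}^{1}\frac{d\Psi}{\sqrt{1-\Psi^2}}=\frac12$. Your step ``multiplying by the density $1/(2\pi)$ of $\zeta$ gives the stated expression'' keeps only the branch $\zeta\in[0,\pi]$ and is therefore wrong. The per-branch reading does not rescue it, because Theorem 1 integrates $f_{R_{\text{SU}}}$ only once over $[R_{\text{SU},\min},R_{\text{SU},\max}]$.

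The paper reaches $1/(2\pi)$ through the same dropped branch, which first appears in Appendix B. There it writes $\mathbb{P}\{\cos\zeta\le c\}=\mathbb{P}\{\zeta\ge\arccos c\}=1-\frac{1}{2\pi}\arccos c$, ignoring the constraint $\zeta\le 2\pi-\arccos c$; the correct value is $1-\frac{1}{\pi}\arccos c$. Likewise, its intermediate $f_{X_\zeta}(x)=\frac{1}{2\pi\sqrt{1-x^2}}$ has total mass $1/2$. Appendix D then differentiates the conditional ``CDF'' $\frac{1}{2\pi}\arccos\Psi$, which only rises to $1/2$ at $\Psi=-1$.

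Reconciling with Lemma 2 points the other way. Its final constant $1/\pi$ is the correctly normalised two-branch constant, even though the intermediate steps of its proof carry $2\pi$.

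So the honest conclusion of your argument is
$f_{R_{\text{SU}}}(r_{\text{SU}};\varphi_{\text{z}},r_{\text{T}},r_{\text{N}})=\frac{r_{\text{SU}}}{\pi r_{\text{N}}r_{\text{T}}\sin\varphi_{\text{z}}\sqrt{1-\Psi^2}}$ for $|\Psi|<1$. The statement as displayed is half of this, and you should say so rather than try to recover the factor $1/(2\pi)$.
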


\begin{proof}
    Please see Appendix D.
\end{proof}

\subsection{Distance of Ordered UT links}
The PDF of the distance ${r}_{\text{T}}$ from the ordered UT$_u$ to the center O$_\text{T}$ of the serving area is given by \cite{shang2024multi}
\begin{equation}
\begin{aligned}
    {{f}_{{{R}_{\text{T}}},u}}\left( {{r}_{\text{T}}} \right) \nonumber 
     & = K \left( \begin{matrix}
       K-1  \\
       i-1  \\
    \end{matrix} \right){{\left( \frac{{{r}_{\text{T}}}^{2}}{{{R}_{\text{T}}}^{2}} \right)}^{u-1}}{{\left( 1-\frac{{{r}_{\text{T}}}^{2}}{{{R}_{\text{T}}}^{2}} \right)}^{K-u}}\frac{2{{r}_{\text{T}}}}{{{R}_{\text{T}}}^{2}} ,
\end{aligned}
\end{equation}
where $\left( \begin{matrix} p\\ q\\ \end{matrix} \right)=\frac{p!}{q!\left( p-q \right)!}$.

\subsection{Distance to Random Satellite}
The cumulative distribution function (CDF) of the distance ${r}_{\text{N}}$ from the center of the serving area to a random satellite is given by \cite{li2026downlink}
\begin{equation}
    {{F}_{{{R}_{\text{N}}}}}\left( {{r}_{\text{N}}} \right)
    =\frac{{{r}_{\text{N}}}^{2}-{{\left( {{R}_{\text{S}}}-{{R}_{\text{E}}} \right)}^{2}}}{2{{R}_{\text{E}}}\left( {{R}_{\text{S}}}-{{R}_{\text{E}}}{{\sin }^{2}}\eta -\sqrt{{{R}_{\text{S}}}^{2}-{{R}_{\text{E}}}^{2}{{\sin }^{2}}\eta }\cos \eta  \right)},
\end{equation}
for $\left( {{R}_{\text{S}}}-{{R}_{\text{E}}} \right)<{{r}_{\text{N}}}<\sqrt{{{R}_{\text{S}}}^{2}-{{R}_{\text{E}}}^{2}{{\sin }^{2}}\eta }-{{R}_{\text{E}}}\cos \eta$. 
Then, the corresponding PDF of ${r}_{\text{N}}$ is given by
\begin{equation}
    {{f}_{{{R}_{\text{N}}}}}\left( {{r}_{\text{N}}} \right)
    =\frac{{{r}_{\text{N}}}}{{{R}_{\text{E}}}\left( {{R}_{\text{S}}}-{{R}_{\text{E}}}{{\sin }^{2}}\eta -\sqrt{{{R}_{\text{S}}}^{2}-{{R}_{\text{E}}}^{2}{{\sin }^{2}}\eta }\cos \eta  \right)}.
\end{equation}

\section{Performance Analysis}
With derived angle and distance distributions, this section presents a performance analysis framework for the CoMS-NOMA network. We first analyze the inter- and intra-satellite interference terms together with the coverage conditions under SIC. Then, the statistical distribution of the combined desired signal power is derived. Finally, analytical expressions for coverage probability and spectral efficiency are obtained.

\subsection{Coverage, Interference Analysis, and SIC Ordering}

The interference includes inter-satellite interference from non-serving, interfering satellites and NOMA intra-satellite interference from serving, cooperative satellites.
In the following, we first define the coverage event in the CoMS-NOMA network. Then, we derive the Laplace transform of the inter-satellite interference term, denoted by ${{I}^{\text{inter}}} = \sum\limits_{j\in {{\mathcal{I}}_{\text{IS}}}} {I_{j}^{\text{inter}}} = \sum\limits_{j\in {{\mathcal{I}}_{\text{IS}}}} \frac{{{G}_{\text{sl}}}}{{{G}_{\text{ml}}}}{{\left| {{h}_{u,j}} \right|}^{2}}$, and analyze the intra-satellite interference term ${{I}^{\text{intra}}}=\sum\limits_{i\in {{\mathcal{S}}_{\text{CS}}}}{I_{i}^{\text{intra}}}$ using mathematical manipulations.

\begin{definition}
(Coverage in CoMS-NOMA Network): 
Since multiple UTs are concurrently served by multiple cooperative satellites, the coverage event must be satisfied by each UT when it is served by each cooperative satellite. 
Consider a uniform SINR threshold $\gamma$ for all intended UTs. If the SINR at UT$_u$ of the message intended for all other stronger UTs is higher than $\gamma$, i.e., $\bigcap\limits_{v=u}^{K}{\left\{ \text{SINR}_{u}^{v}>\gamma \right\}}$, 
UT$_u$ is assumed to be in coverage; for simplicity, we denote 
$\Lambda_{u}^{v}=\left\{ \text{SINR}_{u}^{v}>\gamma  \right\}$. 
Then, the coverage event at UT$_u$ is defined as
\begin{equation}
    {{\Lambda }_{u}}=\left\{ \bigcap\limits_{v=u}^{K}{\Lambda _{u}^{v}} \right\}=\left\{ \bigcap\limits_{v=u}^{K}{\left\{ \text{SINR}_{u}^{v}>\gamma  \right\}} \right\}.
\end{equation}
\end{definition}
The coverage probability at UT$_u$ can be expressed as
\begin{align}
    & \mathbb{P}\left( {{\Lambda }_{u}} \right)  
    =\mathbb{P}\left( \bigcap\limits_{v=u}^{K}{\left\{ \text{SINR}_{u}^{v}>\gamma  \right\}} \right) \nonumber \\ 
    & =\mathbb{P}\left( \bigcap\limits_{v=u}^{K}{\left\{ \frac{\sum\limits_{i\in {{\mathcal{S}}_{u}}}{{{\left| {{h}_{u,i}} \right|}^{2}}{{\xi }_{v,i}}}}{\sum\limits_{i\in {{\mathcal{S}}_{\text{CS}}}}{I_{i}^{\text{intra}}}+\sum\limits_{j\in {{\mathcal{I}}_{\text{IS}}}}{I_{j}^{\text{inter}}}+{{{\bar{\sigma }}}^{2}}}>\gamma  \right\}} \right).
\label{Formula:P_Lambda_u_1}
\end{align}
%
Herein, each cooperating satellite is assumed to employ the same power allocation ratio across the NOMA UTs, i.e., the relative power ratios assigned to different UTs are identical among all cooperative satellites. 
This assumption is mainly adopted for analytical tractability and to avoid excessive coordination overhead among cooperative satellites\footnote{Such implementation also reflects a practical low-complexity implementation strategy, where cooperative satellites follow a unified NOMA transmission configuration while still capturing the essential characteristics of CoMS-NOMA transmission.}.
Then, by transposition, the coverage probability at UT$_u$ is expressed as
\begin{align}
    & \mathbb{P}\left( {{\Lambda }_{u}} \right) \nonumber \\ 
    & =\mathbb{P}\left( \bigcap\limits_{v=u}^{K} \left\{ \sum\limits_{i\in {{\mathcal{S}}_{\text{CS}}}}{{{\left| {{h}_{u,i}} \right|}^{2}}{{\xi }_{v,i}}}  \right. \right. \nonumber \\
    & {\qquad \qquad \qquad \qquad } \left. \left. > \gamma \left[ \sum\limits_{i\in {{\mathcal{S}}_{\text{CS}}}}{I_{i}^{\text{intra}}}+\sum\limits_{j\in {{\mathcal{I}}_{\text{IS}}}}{I_{j}^{\text{inter}}}+{{{\bar{\sigma }}}^{2}} \right] \right\} \right) \nonumber \\ 
    & = \mathbb{P}\left( \bigcap\limits_{v=u}^{K} \left\{ \sum\limits_{i\in {{\mathcal{S}}_{\text{CS}}}}{ {{\left| {{h}_{u,i}} \right|}^{2}} } >  \left[ \sum\limits_{j\in {{\mathcal{I}}_{\text{IS}}}}{I_{j}^{\text{inter}}}+{{{\bar{\sigma }}}^{2}} \right] \frac{\gamma }{\tilde{\xi}_{v}} \right\} \right),
\label{Formula:P_Lambda_u_2}
\end{align}
where $\tilde{\xi}_{v} = {{\xi}_{v}}-\gamma \left( \sum\limits_{v<u}{{{\xi }_{v}}}+\varpi \sum\limits_{v>u}{{{\xi }_{v}}} \right)$.
This reformulation converts the original SINR constraints into an equivalent desired-signal power condition, which facilitates subsequent interference and coverage analysis.

Take ${{Q}_{u}} = \underset{u\le v\le K}{\mathop{\max }}\,\frac{\gamma }{ \tilde{\xi}_{v} }$, the coverage probability at UT$_u$ is expressed as
\begin{equation}
    \mathbb{P}\left( {{\Lambda }_{u}} \right) = \mathbb{P}\left( \sum\limits_{i\in {{\mathcal{S}}_{\text{CS}}}}{{ {\left| {{h}_{u,i}} \right|}^{2} } }>\left[ \sum\limits_{j\in {{\mathcal{I}}_{\text{IS}}}}{I_{j}^{\text{inter}}}+{{{\bar{\sigma }}}^{2}} \right] {{Q}_{u}} \right).
\label{Formula:P_Lambda_u_3}
\end{equation}
Denote $r_{\text{N}}$ as the distance between point O$_\text{T}$ and any of the interfering satellites. For analytical simplicity, when calculating large-scale inter-satellite interference, we assume that UTs are located at O$_\text{T}$, and use the approximate mean of the distance $r_{\text{N}}$.
Then, considering the uniform distribution of UTs in the typical serving area, the distance between a UT and any one of the interfering satellites, denoted by $r_{\text{IU}}$, can be approximated by $\mathbb{E}\left\{r_{\text{IU}}|r_{\text{N}} \right\}\approx r_{\text{N}}$. 
A similar approximation method was also adopted in TN \cite{ali2019downlink,liang2019non} for analytical tractability.
 
This approximation is reasonable because the serving area radius is relatively small compared with the satellite-Earth distance in LEO systems, such that the distance variation among different UT locations has a limited impact on the dominant large-scale interference characteristics.

\begin{lemma}
(Inter-satellite interference): The Laplace transform of the inter-satellite interference term $I_{j}^{\text{inter}}$ is given by
\begin{align}
    & {{\mathcal{L}}_{ {{I}^{\text{inter}}} = \sum\limits_{j\in {{\mathcal{I}}_{\text{IS}}}} {I_{j}^{\text{inter}}} }}\left( s \right) \nonumber \\
    & = \exp \left( -{{\lambda }_{\text{S}}}\pi \frac{{{R}_{\text{S}}}}{{{R}_{\text{E}}}}{{\left( \frac{{{G}_{\text{sl}}}}{{{G}_{\text{ml}}}}\frac{s}{\kappa } \right)}^{\frac{2}{\alpha }}} \right. \nonumber \\
    & {\quad\quad\quad\quad} \cdot \left. \int_{{{\left( \frac{{{G}_{\text{sl}}}}{{{G}_{\text{ml}}}}\frac{s}{\kappa } \right)}^{-\frac{2}{\alpha }}}{{r}_{\text{N}}}^{2}}^{{{\left( \frac{{{G}_{\text{sl}}}}{{{G}_{\text{ml}}}}\frac{s}{\kappa } \right)}^{-\frac{2}{\alpha }}}{{R}_{\text{max} }}^{2}}{\left[ 1-\frac{1}{{{\left( 1+{{u}^{-\frac{\alpha }{2}}} \right)}^{\kappa }}} \right]du} \right),
\end{align}
where 
$R_{\text{max}} = \sqrt{{R_\text{S}}^2 - {R_\text{E}}^2}$.
\end{lemma}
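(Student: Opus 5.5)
The plan is the standard PPP Laplace-functional argument on the spherical shell, followed by a change of variables that produces the normalized integral in the statement. Write $I^{\text{inter}}=\sum_{j\in\mathcal{I}_{\text{IS}}}\frac{G_{\text{sl}}}{G_{\text{ml}}}|g_{u,j}|^2 r_{\text{IU},j}^{-\alpha}$. Under the approximation introduced just before the lemma, $r_{\text{IU},j}\approx r_{\text{N},j}$, the distance from $\text{O}_\text{T}$ to the $j$-th interfering satellite. We then condition on the point process, use independence of the Gamma fades $|g_{u,j}|^2\sim\Gamma(\kappa,\beta)$, and apply the probability generating functional of the PPP on the shell. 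The Gamma moment generating function gives, per satellite,
\begin{equation*}
\mathbb{E}\!\left[e^{-s\frac{G_{\text{sl}}}{G_{\text{ml}}}|g|^2 r^{-\alpha}}\right]=\Big(1+\tfrac{G_{\text{sl}}}{G_{\text{ml}}}\tfrac{s}{\beta} r^{-\alpha}\Big)^{-\kappa},
\end{equation*}
with $\beta=\kappa$. Hence
\begin{equation*}
\mathcal{L}_{I^{\text{inter}}}(s)=\exp\!\Big(-\lambda_{\text{S}}\int_{\mathcal{A}}\Big[1-\Big(1+\tfrac{G_{\text{sl}}}{G_{\text{ml}}}\tfrac{s}{\kappa}r(x)^{-\alpha}\Big)^{-\kappa}\Big]dx\Big),
\end{equation*}
where $\mathcal{A}$ is the part of the shell occupied by interfering satellites.

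Next we express the surface measure in terms of the distance $r$ from $\text{O}_\text{T}=(0,0,R_\text{E})$. A point on the shell of radius $R_\text{S}$ at polar angle $\phi$ satisfies $r^2=R_\text{S}^2+R_\text{E}^2-2R_\text{S}R_\text{E}\cos\phi$. The area element is $dA=2\pi R_\text{S}^2\sin\phi\,d\phi$, and $2r\,dr=2R_\text{S}R_\text{E}\sin\phi\,d\phi$. Therefore
\begin{equation*}
dA=2\pi\frac{R_\text{S}}{R_\text{E}}\,r\,dr=\pi\frac{R_\text{S}}{R_\text{E}}\,d(r^2).
\end{equation*}
This is exactly the source of the prefactor $\pi R_\text{S}/R_\text{E}$, and it is consistent with the linear-in-$r_\text{N}^2$ CDF of Subsection D. The interfering satellites are those beyond the serving cap but still above the horizon. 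The inner radius is the cap-edge distance $r_\text{N}$, taken as the upper end of the range in Subsection D. The outer radius is the horizon distance $R_{\max}=\sqrt{R_\text{S}^2-R_\text{E}^2}$, obtained from the right triangle tangent to the Earth at $\text{O}_\text{T}$.

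Finally, we substitute $u=\big(\tfrac{G_{\text{sl}}}{G_{\text{ml}}}\tfrac{s}{\kappa}\big)^{-2/\alpha} r^2$. Then $r^{-\alpha}\tfrac{G_{\text{sl}}}{G_{\text{ml}}}\tfrac{s}{\kappa}=u^{-\alpha/2}$, and $d(r^2)=\big(\tfrac{G_{\text{sl}}}{G_{\text{ml}}}\tfrac{s}{\kappa}\big)^{2/\alpha}du$. The limits $r_\text{N}^2$ and $R_{\max}^2$ are scaled accordingly, which yields the stated expression.

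The routine parts are the PGFL and the Gamma MGF. The step that needs care is justifying the integration domain: we must treat the interferers as a PPP restricted to the annular region between the cap edge and the horizon, independent of the cooperative set, and replace the UT-to-satellite distance by $r_\text{N}$. The Jacobian $dA=\pi\frac{R_\text{S}}{R_\text{E}}d(r^2)$ is the key identity, and we would verify it first against the CDF of $r_\text{N}$ already stated.
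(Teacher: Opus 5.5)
Your proposal is correct and follows the paper's own proof. Both use the PGFL of the PPP on the shell, the Gamma (Nakagami-$m$) MGF with $\beta=\kappa$, the area Jacobian $dA=2\pi\frac{R_\text{S}}{R_\text{E}}\,r\,dr$ (which the paper cites as $\partial|\mathcal{A}_{r_\text{N}}|/\partial r_\text{N}$), and the substitution $u=\big(\frac{G_{\text{sl}}}{G_{\text{ml}}}\frac{s}{\kappa}\big)^{-2/\alpha}r^2$ over the region from $r_\text{N}$ out to the horizon $R_{\max}$. Your law-of-cosines derivation of the Jacobian keeps the factor $r$ that the paper's displayed step (c) drops (it writes $dv$ where $v\,dv$ is needed for step (d)), so your version is the internally consistent one.
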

\begin{proof}
As shown in Figs. \ref{fig:User_dist} and \ref{fig:Spherical_cap}, the distance range from point O$_\text{T}$ to a random satellite S is $R_{\text{min}} \le r_{\text{N}} \le R_{\text{max}}$, where $R_{\text{min}} = R_{\text{Ser,min}} = H_{\text{S}}$, and $R_{\text{max}} = \sqrt{{R_\text{S}}^2 - {R_\text{E}}^2}$.
By excluding the approximate distance range, i.e., $R_{\text{Ser,min}} \le r_{\text{N}} \le R_{\text{Ser,max}}$, of the desired cooperative signal, the approximate distance range of the aggregated interference signal can be written as $R_{\text{Ser,max}} \le r_{\text{N}} \le R_{\max}$.

The service region on the sphere has been denoted by $\mathcal{A}$ in Appendix A, and here we further denote its complementary area, i.e., the rest of the spherical cap above the typical serving area, by $\mathcal{A}^{\text{c}}$, as shown in Fig. \ref{fig:Angle}.
Then, the Laplace transform of the aggregated interference power can be calculated as
\begin{align}
    & {{\mathcal{L}}_{ {{I}^{\text{inter}}} = \sum\limits_{j\in {{\mathcal{I}}_{\text{IS}}}} {I_{j}^{\text{inter}}} }}\left( s \right) \nonumber \\
    &=\mathbb{E}\left[ {{e}^{-sI_{j}^{\text{inter}}}}\left| \left\| {\mathbf{s}_{j}}-{\mathbf{x}_{u}} \right\|={{r}_{\text{N}}},j\in {{\mathcal{I}}_{\text{IS}}} \right. \right] \nonumber \\ 
    & \overset{\left( a \right)}{\mathop{=}} \,\exp \left( -{{\lambda }_{\text{S}}}\int_{v\in \mathcal{A}_{{{r}_{_{\text{N}}}}}^{\text{c}}}{\left( 1-\mathbb{E}\left[ {{e}^{-s\frac{{{G}_{\text{sl}}}}{{{G}_{\text{ml}}}}{{\left| {{h}_{j,u}} \right|}^{2}}{{v}^{-\alpha }}}} \right] \right)dv} \right) \nonumber \\ 
    & \overset{\left( b \right)}{\mathop{=}}\,\exp \left( -{{\lambda }_{\text{S}}}\int_{v\in \mathcal{A}_{{{r}_{_{\text{N}}}}}^{\text{c}}}{\left[ 1-\frac{1}{{{\left( 1+\frac{{{G}_{\text{sl}}}}{{{G}_{\text{ml}}}}\frac{s{{v}^{-\alpha }}}{\kappa } \right)}^{\kappa }}} \right]dv} \right) \nonumber \\ 
    & \overset{\left( c \right)}{\mathop{=}}\,\exp \left( -2\pi {{\lambda }_{\text{S}}}\frac{{{R}_{\text{S}}}}{{{R}_{\text{E}}}} \int_{{r}_{\text{N}}}^{{{R}_{\max }}}{\left[ 1-\frac{1}{{{\left( 1+\frac{{{G}_{\text{sl}}}}{{{G}_{\text{ml}}}}\frac{s{{v}^{-\alpha }}}{\kappa } \right)}^{\kappa }}} \right]dv} \right) \nonumber \\ 
    & \overset{\left( d \right)}{\mathop{=}}\,\exp \left( -{{\lambda }_{\text{S}}}\pi \frac{{{R}_{\text{S}}}}{{{R}_{\text{E}}}}{{\left( \frac{{{G}_{\text{sl}}}}{{{G}_{\text{ml}}}}\frac{s}{\kappa } \right)}^{\frac{2}{\alpha }}} \right. \nonumber \\
    & {\quad\quad\quad\quad} \cdot \left. \int_{{{\left( \frac{{{G}_{\text{sl}}}}{{{G}_{\text{ml}}}}\frac{s}{\kappa } \right)}^{-\frac{2}{\alpha }}}{{r}_{\text{N}}}^{2}}^{{{\left( \frac{{{G}_{\text{sl}}}}{{{G}_{\text{ml}}}}\frac{s}{\kappa } \right)}^{-\frac{2}{\alpha }}}{{R}_{\max }}^{2}}{\left[ 1-\frac{1}{{{\left( 1+{{u}^{-\frac{\alpha }{2}}} \right)}^{\kappa }}} \right]du} \right),
\end{align}
where (a) is because of the probability generating functional (PGFL) of PPP \cite{park2022tractable,li2026coverage}, (b) is due to the distribution properties of Nakagami-$m$ distribution, (c) follows from the derivative of $\frac{\partial \left| {{\mathcal{A}}_{{{r}_{\text{N}}}}} \right|}{\partial {{r}_{\text{N}}}}=2\frac{{{R}_{\text{S}}}}{{{R}_{\text{E}}}}\pi {{r}_{\text{N}}}$, and (d) comes from the change of variable, i.e., $u={{\left( \frac{{{G}_{\text{sl}}}}{{{G}_{\text{ml}}}}\frac{s}{\kappa } \right)}^{-\frac{2}{\alpha }}}{{v}^{2}}$ followed by $du=2v{{\left( \frac{{{G}_{\text{sl}}}}{{{G}_{\text{ml}}}}\frac{s}{\kappa } \right)}^{-\frac{2}{\alpha }}}$, which completes the proof.
\end{proof}
The above Laplace transform is performed to characterize the aggregated inter-satellite interference and to enable tractable coverage probability analysis under the considered PPP-based satellite distributions.

\begin{remark}
    (Intra-satellite interference): According to \eqref{Formula:P_Lambda_u_2}, the impact of intra-satellite interference can be regarded as a reduction in the effective power allocation coefficient of UT$_v$, and the corresponding $\tilde{\xi}_{v}$ is not related to the transmission rate of the message to be decoded. 
\end{remark}

To ensure successful operation for all UTs decoding the received message under the NOMA mechanism, the requirement $\tilde{\xi}_{v} > 0$ must be satisfied; otherwise, the transmission rate will be too high for the given power allocation, and thus the SIC mechanism cannot be successfully operated. This requirement is also referred to as the ``NOMA necessary condition'' \cite{ali2019downlink}.

\begin{remark}
(SIC Ordering in CoMS-NOMA Network): UTs are served in a uniform order by cooperative satellites. Instead of computing the exact mean signal power (MSP) by aggregating distances from each UT to all cooperative satellites, which requires per-satellite distance estimation and significantly increases implementation overhead, we approximate the ordering by the UTs’ distances to the center of the serving area.
This center-distance-based ordering serves as an effective surrogate for MSP ordering\footnote{Instantaneous SINR (ISINR)-based ordering generally achieves slightly better performance than MSP ordering. However, as demonstrated in \cite{li2026coverage}, both methods yield comparable overall performance, while MSP ordering offers significantly lower complexity since it does not require UTs to provide feedback or coordinate ISINR information among cooperative satellites.} because 
UTs closer to the service-area center generally maintain shorter average distances to multiple cooperative satellites simultaneously, thereby experiencing stronger aggregated mean signal power under cooperative transmission.
\end{remark}

From an implementation perspective, the MSP ordering can be used as the SIC ordering, with UTs prioritized by their distance to the service area center. This ordering scheme only requires UT location information, which can be obtained via existing positioning mechanisms with minimal signaling overhead. The computational complexity mainly involves simple distance calculations and sorting operations, which scale linearly with the number of users and are therefore computationally tractable. 
In contrast, CSI-based or ISINR ordering would require frequent channel estimation and feedback in dynamic LEO environments, potentially incurring significantly higher signaling overhead.

\subsection{Combined Signal Power Distribution}
The desired signal power from the $i$-th cooperative satellite to UT$_u$ involves a random variable (RV) ${{V}_{i}} = {\left| {{h}_{u,i}} \right|}^{2} = \frac{{{\left| {{g}_{u,i}} \right|}^{2}}}{{{r}_{u,i}}^{\alpha }}$, where ${{g}_{u,i}}$ and ${{r}_{u,i}}$ are the channel and distance between the $i$-th cooperative satellite and UT$_u$, respectively. The corresponding CDF can be expressed as ${{F}_{{{V}_{i}}}}\left( v \right)=\text{GamCDF}\left( {{m}_{u,i}},{{r}_{u,i}} \right)$. For $N_{\text{cs}}$ cooperative satellites, we denote ${{V}_{\text{sum}}}=\sum_{i=1}^{\left| {{\mathcal{S}}_{u}} \right|=N_{\text{cs}}} V_i={{V}_{1}}+{{V}_{2}}+\ldots +{{V}_{\left| {{\mathcal{S}}_{u}} \right|}}$. 

Subsequently, to obtain the CDF of ${V}_{\text{sum}}$, its moment generating function (MGF) is expressed as
\begin{align}
& {{\mathcal{M}}_{{{V}_{\text{sum}}}}}\left( s,r \right) \nonumber \\
& ={{\mathbb{E}}_{{{V}_{1}},{{V}_{2}},\ldots ,{{V}_{\left| {{\mathcal{S}}_{\text{CS}}} \right|}}}}\left[ \exp \left( s\left( {{V}_{1}}+{{V}_{2}}+\ldots +{{V}_{\left| {{\mathcal{S}}_{\text{CS}}} \right|}} \right) \right) \right] \nonumber \\ 
& \begin{aligned}
= & {{\left( 1-\frac{{{r}_{u,1}}^{-\alpha }}{{{m}_{u,1}}}s \right)}^{-{{m}_{u,1}}}}\cdot 
{{\left( 1-\frac{{{r}_{u,2}}^{-\alpha }}{{{m}_{u,2}}}s \right)}^{-{{m}_{u,2}}}} \\
& \cdots 
{{\left( 1-\frac{{{r}_{u,\left| {{\mathcal{S}}_{\text{CS}}} \right|}}^{-\alpha }}{{{m}_{u,\left| {{\mathcal{S}}_{\text{CS}}} \right|}}}s \right)}^{-{{m}_{u,\left| {{\mathcal{S}}_{\text{CS}}} \right|}}}}. \end{aligned}
\end{align}
Here, we drop the index $u$ for brevity. Using a partial fraction expansion, ${{\mathcal{M}}_{{{V}_{\text{sum}}}}}$ can be rewritten as
\begin{equation}
    {{\mathcal{M}}_{{{V}_{\text{sum}}}}}\left( s,{{r}_{i}} \right)={{\sum\limits_{i=1}^{{{N}_{\text{CS}}}}{\sum\limits_{l=1}^{{{m}_{i}}}{{{c}_{i,l}}\left( s-\frac{{{m}_{i}}}{{{r}_{i}}^{-\alpha }} \right)}}}^{-l}},
\end{equation}
where ${{c}_{i,l}}$ is calculated from the high-order derivative as
\begin{equation}
    {{c}_{i,l}}=\frac{1}{\left( {{m}_{i}}-l \right)!}{{\left( -{{s}_{i}} \right)}^{{{m}_{i}}}}\frac{{{\partial }^{{{m}_{i}}-l}}}{\partial {{s}^{{{m}_{i}}-l}}}{{\left. \left[ \prod\limits_{p\ne i}^{{{N}_{\text{CS}}}}{{{\left( 1-{{b}_{p}}s \right)}^{{{m}_{i}}}}} \right] \right|}_{s={{s}_{i}}}},
\end{equation}
and denote ${{b}_{p}}=\frac{1}{{{s}_{p}}}$. By applying the multivariate Leibniz rule, which is 
\begin{equation}
    \frac{{{\partial }^{n}}}{\partial {{s}^{n}}}{{\left( 1-bs \right)}^{-m}}={{b}^{n}}{{\left( m \right)}_{n}}{{\left( 1-bs \right)}^{-m-n}},
\end{equation}
where ${{\left( m \right)}_{n}}=\frac{\Gamma \left( m+n \right)}{\Gamma \left( m \right)}$ is the rising factorial. Using the standard coefficient formula at a pole of order ${{m}_{i}}$, the finite multi-index sum in ${{c}_{i,l}}$ can be further calculated as
\begin{align}
&{{c}_{i,l}} 
    =  {{\left( -\frac{1}{{{b}_{i}}} \right)}^{{{m}_{i}}}} \nonumber \\
    & {\qquad} \cdot  \sum\limits_{\begin{smallmatrix} 
 {{\left( {{n}_{p}} \right)}_{p\ne i}}\in \mathbb{Z}\ge 0 \\ 
 \sum\nolimits_{p\ne i}{{{n}_{p}}={{m}_{i}}-l} 
\end{smallmatrix}} {\prod\limits_{p\ne i}^{{{N}_{\text{CS}}}}{\left[ \left( \begin{matrix}
   {{m}_{p}}+{{n}_{p}}-1  \\
   {{n}_{p}}  \\
\end{matrix} \right)\frac{b_{k}^{nk}}{{{\left( 1-\frac{{{b}_{p}}}{{{b}_{i}}} \right)}^{{{m}_{p}}+{{n}_{p}}}}} \right]}}. 
\end{align}
By inserting ${c}_{i,l}$, the CDF of ${V}_{\text{sum}}$ can be expressed as
\begin{equation}
    {{F}_{{{V}_{\text{sum}}}}}\left( v \right)=\sum\limits_{i=1}^{{{N}_{\text{CS}}}}{\sum\limits_{l=1}^{{{m}_{i}}}{{{c}_{i,l}}\left( -\frac{{{r}_{i}}^{-\alpha }}{{{m}_{i}}} \right)}}\text{GamCDF}\left( v,l,{{m}_{i}},{{r}_{i}} \right).
\end{equation}
Considering the square-law relationship \cite{nakagami1960the,holm2004sum} between the Nakagami-$m$ distribution and the Gamma distribution as previously mentioned, and the uniform signal fading conditions, $m = m_{u,i} = \kappa$. Then, the CDF of ${V}_{\text{sum}}$ is given by
\begin{equation}
    {{F}_{{{V}_{\text{sum}}}}}\left( v \right)=\sum\limits_{i=1}^{{{N}_{\text{CS}}}}{\sum\limits_{l=1}^{{ \kappa }}{{{c}_{i,l}}\left( -\frac{{{r}_{i}}^{-\alpha }}{{ \kappa }} \right)}}\text{GamCDF}\left( v,l,{\kappa},{{r}_{i}} \right).
\end{equation}

For numerical analysis and the relationship between the cooperative satellite number $N_{\text{CS}}$ and service region angle $\eta$, a lower bound is applied to ensure that the resulting satellite number remains an integer. 
The analytical cooperative satellite number is calculated as $N_{\text{CS}}^{\text{ana}}=2\pi {{R}_{\text{S}}}{{H}_{\eta ,\text{c}}}{{\lambda }_{\text{S}}}$, where $2\pi {{R}_{\text{S}}}{{H}_{\eta ,\text{c}}}$ is the area in which the cooperative satellite is located, and ${{H}_{\eta ,\text{c}}}$ is in \eqref{Formula:H_eta_c} of Appendix C. 
Then, we have ${{N}_{\text{CS}}}=\left\lfloor N_{\text{CS}}^{\text{ana}} \right\rfloor $.

\subsection{Coverage Probability and Spectral Efficiency}

\begin{theorem}
(Coverage probability): In the CoMS-NOMA network, the coverage probability of UT$_u$ is expressed as
\begin{align}
    & \mathbb{P}\left( {{\Lambda }_{u}} \right) \nonumber \\
    & = \int_{0}^{{{R}_{\text{T}}}} \int_{{{R}_{\min }}}^{{{R}_{\max }}} \int_{0}^{\eta } \int_{{{R}_{\text{SU},\min }}}^{{{R}_{\text{SU},\max }}} \nonumber \\
    & {\quad} \left[ \sum\limits_{i=1}^{{{N}_{\text{CS}}}}{\sum\limits_{l=1}^{\kappa }{{{c}_{i,l}}{{\left( -\frac{{{r}_{i}}^{-\alpha }}{\kappa } \right)}^{l}}}}{{f}_{{{R}_{\text{SU}}}}}\left( {{r}_{i}};{{r}_{\text{N}}},{{r}_{\text{T}}},{{\varphi }_{\text{z}}} \right) \right. \nonumber \\
    & { \quad \quad} \left. 
    \cdot \sum \limits_{k=0}^{l-1} \frac{{{\left[ \left( I^{\text{inter}}+{{{\bar{\sigma }}}^{2}} \right){{Q}_{u}} \right]}^{k}}}{k!} {{\left( -1 \right)}^{k}} \frac{{{d}^{k}}\mathcal{L}_{{I}^{\text{inter}}} \left( s \right)}{d{{s}^{k}}} \right] d r_i \nonumber \\
    & {\qquad \qquad \quad \quad} 
    \cdot {{f}_{{{\varphi }_{\text{z}}}}}\left( {{\varphi }_{\text{z}}} \right) d{{\varphi }_{\text{z}}} {{f}_{{{R}_{\text{N}}}}} \left( {{r}_{i}} \right) d{{r}_{i}} {{f}_{{{R}_{\text{T}}},u}}\left( {{r}_{\text{T}}} \right) d{{r}_{\text{T}}},
\end{align}
\label{Formula_P_Lambda_u_4}
where the range of $r_i=r_{\text{SU}} \in \left[{{R}_{\text{SU},\min }}, {{R}_{\text{SU},\max }} \right]$ is provided in \eqref{Formula:range_r_SU} of Lemma 3 and its proof.
\end{theorem}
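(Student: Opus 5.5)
Starting from the reformulated coverage event \eqref{Formula:P_Lambda_u_3}, the plan is to condition on everything that fixes the geometry. That means the ordered distance $r_{\text{T}}$ of UT$_u$ to O$_\text{T}$, the satellite-to-center distance $r_{\text{N}}$, the zenith angle $\varphi_{\text{z}}$, and the satellite-to-UT distances $r_i=r_{\text{SU}}$ of the cooperative satellites. Conditioned on these quantities, $V_{\text{sum}}=\sum_{i\in\mathcal{S}_{\text{CS}}}|h_{u,i}|^2$ is a sum of independent Gamma variables. It is also independent of $I^{\text{inter}}$, because the cooperative and interfering satellites occupy the disjoint regions $\mathcal{A}$ and $\mathcal{A}^{\text{c}}$ of the PPP. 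Hence
\begin{equation*}
\mathbb{P}(\Lambda_u\mid\cdot)=\mathbb{E}_{I^{\text{inter}}}\big[1-F_{V_{\text{sum}}}\big((I^{\text{inter}}+\bar{\sigma}^2)Q_u\big)\big],
\end{equation*}
with the conditional CDF $F_{V_{\text{sum}}}$ taken from the partial-fraction result of Section IV-B.

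Next, each term $\text{GamCDF}(\cdot,l,\kappa,r_i)$ in that mixture is an Erlang CDF with integer shape $l$. So $1-\text{GamCDF}$ has the finite-series form $e^{-a x}\sum_{k=0}^{l-1}(a x)^k/k!$, where $a=\kappa r_i^{\alpha}$ is the rate of the $i$-th pole. Inserting $x=(I^{\text{inter}}+\bar{\sigma}^2)Q_u$ gives terms of the form $x^k e^{-a x}$. I will handle the expectation over $I^{\text{inter}}$ of such terms with the standard identity $\mathbb{E}[X^k e^{-sX}]=(-1)^k\frac{d^k}{ds^k}\mathcal{L}_X(s)$. This is the step that produces the factor $\frac{[(I^{\text{inter}}+\bar{\sigma}^2)Q_u]^k}{k!}(-1)^k\frac{d^k\mathcal{L}_{I^{\text{inter}}}(s)}{ds^k}$ in the statement.

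For the Laplace transform itself I will use Lemma 5 with $s$ evaluated at $aQ_u$. The noise contribution enters through $e^{-a\bar{\sigma}^2Q_u}$ and the polynomial in $\bar{\sigma}^2$, which the statement keeps compactly in the bracket. Collecting the coefficients $c_{i,l}(-r_i^{-\alpha}/\kappa)^l$ reproduces the inner double sum over $i$ and $l$.

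Finally, I will decondition in the stated order. First, $r_i$ is averaged with the conditional PDF of Proposition 2 over the range given by Lemma 3. Then $\varphi_{\text{z}}$ is averaged with Proposition 1 over $[0,\eta]$, then $r_{\text{N}}$ with its PDF from Section III-D over $[R_{\min},R_{\max}]$, and last the ordered $r_{\text{T}}$ with $f_{R_{\text{T}},u}$ over $[0,R_{\text{T}}]$.

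I expect the main obstacle to be justifying that this single-satellite distance average can stand in for the joint average over all $N_{\text{CS}}$ cooperative distances. Strictly speaking the $r_i$ are i.i.d. given $(r_{\text{T}})$, and the coefficients $c_{i,l}$ couple them. I would argue that, under the uniform PPP placement, the cooperative satellites are i.i.d. in $\mathcal{A}$, so each summand indexed by $i$ can be averaged over its own $r_i$ with the others treated through the same marginal. This is the same kind of mean-field simplification already adopted for $r_{\text{IU}}\approx r_{\text{N}}$, and I would state it explicitly as the approximation underlying the expression.
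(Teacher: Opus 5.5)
Your proposal is correct and follows essentially the same route as the paper's proof. The paper also writes the CDF of $V_{\text{sum}}$ as a Gamma mixture from the partial-fraction expansion, uses the finite Erlang series for its complement, averages out $I^{\text{inter}}$ with $\mathbb{E}[X^k e^{-sX}]=(-1)^k\frac{d^k\mathcal{L}_X(s)}{ds^k}$, and then deconditions over $r_i$, $\varphi_{\text{z}}$, $r_{\text{N}}$, $r_{\text{T}}$; these are exactly its steps (a)--(c). Your explicit rate $\kappa r_i^{\alpha}$, the binomial treatment of $\bar{\sigma}^2$, and your flagging of the single-marginal averaging over $r_i$ (which the paper does silently, even though $c_{i,l}$ couples all distances) are refinements, not departures; the only slip is that the Laplace transform is Lemma 4, not Lemma 5.
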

\begin{proof}
The coverage probability in \eqref{Formula:P_Lambda_u_3} can be rewritten as
\begin{align}
    & \mathbb{P}\left( {{\Lambda }_{u}} \right) \nonumber \\
    & = \mathbb{P}\left( \sum\limits_{i\in {{\mathcal{S}}_{\text{CS}}}}{\frac{{{\left| {{g}_{u,i}} \right|}^{2}}}{{{r}_{u,i}}^{\alpha }}} > \left( I^{\text{inter}}+{{{\bar{\sigma }}}^{2}} \right){{Q}_{u}} \right) \nonumber \\ 
    & \overset{\left( a \right)}{\mathop{=}} {{\mathbb{E}}_{ {{r}_{\text{T}}},{{r}_{\text{N}}},{\varphi }_{\text{z}} }} \left\{ \sum\limits_{i=1}^{{{N}_{\text{CS}}}}{\sum\limits_{l=1}^{{\kappa}}{{{c}_{i,l}}{{\left( -\frac{{{r}_{i}}^{-\alpha }}{{\kappa}} \right)}^{l}}}} {{f}_{{{R}_{\text{SU}}}}}\left( {{r}_{i}};{{r}_{\text{N}}},{{r}_{\text{T}}},{{\varphi }_{\text{z}}} \right) \right. \nonumber \\
    & {\qquad  \quad \quad \quad \quad} \left. 
    \cdot {{e}^{-\left( {{I}^{\text{inter}}}+{{{\bar{\sigma }}}^{2}} \right){{Q}_{u}}}}\sum\limits_{k=0}^{l-1}{\frac{{{\left[ \left( {{I}^{\text{inter}}}+{{{\bar{\sigma }}}^{2}} \right){{Q}_{u}} \right]}^{k}}}{k!}} \right\} \nonumber  \\
    & \overset{\left( b \right)}{\mathop{=}} {{\mathbb{E}}_{ {{r}_{\text{T}}},{{r}_{\text{N}}},{\varphi }_{\text{z}} }}\left\{  \sum\limits_{i=1}^{{{N}_{\text{CS}}}} {\sum\limits_{l=1}^{{\kappa}}{{{c}_{i,l}}{{\left( -\frac{{{r}_{i}}^{-\alpha }}{{\kappa}} \right)}^{l}}}} {{f}_{{{R}_{\text{SU}}}}}\left( {{r}_{i}};{{r}_{\text{N}}},{{r}_{\text{T}}},{{\varphi }_{\text{z}}} \right) \right. \nonumber \\
    & {\quad \quad \quad \quad} \left. 
    \cdot \sum\limits_{k=0}^{l-1}{\frac{{{\left[ \left( I^{\text{inter}}+{{{\bar{\sigma }}}^{2}} \right){{Q}_{u}} \right]}^{k}}}{k!}{{\left( -1 \right)}^{k}}}\frac{{{d}^{k}}\mathcal{L}_{{I}^{\text{inter}}} \left( s \right)}{d{{s}^{k}}}  \right\}  \nonumber \\
    & \overset{\left( c \right)}{\mathop{=}} \int_{0}^{{{R}_{\text{T}}}} \int_{{R_{\text{Ser,max}}}}^{{{R}_{\max }}} {\int_{0}^{\eta }} \int_{{{R}_{\text{SU},\min }}}^{{{R}_{\text{SU},\max }}} \nonumber \\
    & {\quad} \left[ \sum\limits_{i=1}^{{{N}_{\text{CS}}}}{\sum\limits_{l=1}^{\kappa }{{{c}_{i,l}}{{\left( -\frac{{{r}_{i}}^{-\alpha }}{\kappa } \right)}^{l}}}}{{f}_{{{R}_{\text{SU}}}}}\left( {{r}_{i}};{{r}_{\text{N}}},{{r}_{\text{T}}},{{\varphi }_{\text{z}}} \right) \right. \nonumber \\
    & {\qquad \quad } \left. 
    \cdot \sum \limits_{k=0}^{l-1} \frac{{{\left[ \left( I^{\text{inter}}+{{{\bar{\sigma }}}^{2}} \right){{Q}_{u}} \right]}^{k}}}{k!} {{\left( -1 \right)}^{k}} \frac{{{d}^{k}}\mathcal{L}_{{I}^{\text{inter}}} \left( s \right)}{d{{s}^{k}}} \right] d r_i \nonumber \\
    & {\qquad \qquad \quad \quad} 
    \cdot {{f}_{{{\varphi }_{\text{z}}}}}\left( {{\varphi }_{\text{z}}} \right) d{{\varphi }_{\text{z}}} {{f}_{{{R}_{\text{N}}}}} \left( {{r}_{\text{N}}} \right) d{{r}_{\text{N}}} {{f}_{{{R}_{\text{T}}},u}}\left( {{r}_{\text{T}}} \right) d{{r}_{\text{T}}},
\end{align}
where (a) follows from the CDF of the Gamma distribution, (b) is due to the derivative property of the Laplace transform, i.e., $\mathbb{E}\left[ {{X}^{k}}{{e}^{-sX}} \right]={{\left( -1 \right)}^{k}}\frac{{{d}^{k}}{{\mathcal{L}}_{X}}\left( s \right)}{d{{s}^{k}}}$ \cite{park2022tractable}, and (c) comes from the expectation over ${{r}_{\text{T}}}$, ${{r}_{\text{N}}}$, ${\varphi }_{\text{z}}$, which completes the proof.
\end{proof}
The obtained expression captures the joint impact of cooperative signal enhancement, inter-satellite interference, and SIC conditions on the coverage performance.
To evaluate the average coverage performance of the served UTs, their average coverage probability is calculated as 
\begin{equation}
    \mathbb{P}_{\text{cov}}^{\text{avg}} 
    = \frac{1}{K} \sum_{u=1}^{K} \mathbb{P} \left( {{\Lambda }_{u}} \right).
\end{equation}

\begin{theorem}
(Spectral efficiency): In the CoMS-NOMA network, the spectral efficiency of UT$_u$ is 
\begin{align}
    & \text{SE}_u  \nonumber \\
    & = \int_{0}^{\infty} \int_{0}^{{{R}_{\text{T}}}} \int_{{R_{\text{Ser,max}}}}^{{{R}_{\text{max} }}} {\int_{0}^{\eta }} \int_{{{R}_{\text{SU},\min }}}^{{{R}_{\text{SU},\max }}} \nonumber \\
    & \left[ \sum\limits_{i=1}^{{{N}_{\text{CS}}}}{\sum\limits_{l=1}^{\kappa }{{{c}_{i,l}}{{\left( -\frac{{{r}_{i}}^{-\alpha }}{\kappa } \right)}^{l}}}}{{f}_{{{R}_{\text{SU}}}}}\left( {{r}_{i}};{{r}_{\text{N}}},{{r}_{\text{T}}},{{\varphi }_{\text{z}}} \right) \right. \nonumber \\
    & \left. 
    \cdot \sum \limits_{k=0}^{l-1} \frac{{{\left[ \left( I^{\text{inter}}+{{{\bar{\sigma }}}^{2}} \right){ \left( \underset{u\le v\le K}{\mathop{\max }}\,\frac{2^t-1 }{ \tilde{\xi}_{v} } \right) } \right]}^{k}}}{k!} {{\left( -1 \right)}^{k}} \frac{{{d}^{k}}\mathcal{L}_{{I}^{\text{inter}}}\left( s \right)}{d{{s}^{k}}} \right] \nonumber \\
    & {\qquad \qquad \quad \quad} 
    \cdot d r_i {{f}_{{{\varphi }_{\text{z}}}}}\left( {{\varphi }_{\text{z}}} \right) d{{\varphi }_{\text{z}}} {{f}_{{{R}_{\text{N}}}}} \left( {{r}_{\text{N}}} \right) d{{r}_{\text{N}}} {{f}_{{{R}_{\text{T}}},u}}\left( {{r}_{\text{T}}} \right) d{{r}_{\text{T}}} dt.
\label{Formula_SE_u_1}
\end{align}
%
\end{theorem}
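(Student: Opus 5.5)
The spectral efficiency of UT$_u$ is defined as the ergodic rate $\text{SE}_u=\mathbb{E}\left[\log_2\left(1+\text{SINR}_u\right)\right]$. Here the effective rate is limited by the SIC chain, so the relevant quantity is the joint event that UT$_u$ decodes every message $v=u,\ldots,K$ at rate $t$. I would use the standard tail identity for a nonnegative random variable $X$,
\begin{equation*}
\mathbb{E}[X]=\int_{0}^{\infty}\mathbb{P}(X>t)\,dt .
\end{equation*}
I would apply it with $X=\log_2(1+\text{SINR})$, so that $\mathbb{P}(X>t)=\mathbb{P}(\text{SINR}>2^t-1)$. Hence $\text{SE}_u=\int_0^\infty \mathbb{P}\big(\bigcap_{v=u}^{K}\{\text{SINR}_u^v>2^t-1\}\big)\,dt$. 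This equals the coverage probability of Theorem 1 with the uniform threshold $\gamma$ replaced by $2^t-1$, integrated over $t$.

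The key steps follow the derivation of the coverage event. First, with $\gamma=2^t-1$ I would repeat the transposition leading to \eqref{Formula:P_Lambda_u_2}, under the same common power-allocation assumption across cooperative satellites. This reduces the joint SINR event to $\sum_{i\in\mathcal{S}_{\text{CS}}}|h_{u,i}|^2>(I^{\text{inter}}+\bar\sigma^2)\max_{u\le v\le K}\frac{2^t-1}{\tilde\xi_v}$, where $\tilde\xi_v$ is now evaluated with $\gamma=2^t-1$. Second, I would invoke the partial-fraction CDF of $V_{\text{sum}}$ from Subsection B. This gives a weighted sum of Gamma complementary CDFs with integer shape $l$, namely $e^{-x}\sum_{k=0}^{l-1}x^k/k!$ with $x$ the scaled threshold. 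Third, conditioning on the geometry, I would average over $I^{\text{inter}}$ using $\mathbb{E}[X^ke^{-sX}]=(-1)^k\frac{d^k}{ds^k}\mathcal{L}_X(s)$ with Lemma 4. Finally, I would decondition over $r_i$ via Proposition 2, over $\varphi_{\text{z}}$ via Proposition 1, over $r_{\text{N}}$ via its PDF, and over $r_{\text{T}}$ via the ordered-distance PDF. The integration ranges are taken from Lemma 3 and from the interference range $[R_{\text{Ser,max}},R_{\max}]$. Placing the outer $\int_0^\infty dt$ gives \eqref{Formula_SE_u_1}; interchanging the $t$-integral with the expectations is justified by Tonelli, since the integrand is a nonnegative probability.

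The main obstacle is the NOMA necessary condition. For large $t$ one gets $\tilde\xi_v=\xi_v-(2^t-1)(\cdots)\le 0$ for some $v$, and then the reduced event is empty, so the probability must be set to zero. I would handle this by noting that the integrand vanishes beyond $t^\ast=\min_v\log_2\big(1+\xi_v/(\sum_{v'<v}\xi_{v'}+\varpi\sum_{v'>v}\xi_{v'})\big)$. The upper limit $\infty$ is therefore formally correct once the integrand is interpreted as zero there, with the max taken over positive $\tilde\xi_v$ only. A secondary subtlety is that $\max_v$ depends on $t$ through $\tilde\xi_v$. It must therefore be kept inside the $t$-integral rather than pulled out as a constant $Q_u$, which is exactly how it appears in the statement.
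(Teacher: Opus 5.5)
Your proposal is correct and follows the paper's proof. You apply $\mathbb{E}\{X\}=\int_0^\infty \mathbb{P}\{X>t\}\,dt$ to the rate, set $\gamma=2^t-1$ in the joint SIC coverage event, and integrate the Theorem~1 expression (with the $t$-dependent $\max_{u\le v\le K}(2^t-1)/\tilde{\xi}_v$) over $t$. Your explicit handling of the NOMA necessary condition (setting the integrand to zero for $t\ge t^\ast$) and the Tonelli justification for swapping the $t$-integral with the expectation are useful clarifications that the paper leaves implicit.
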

\begin{proof}
According to a mathematical theory 
\begin{equation}
    \mathbb{E}\{X\}=\int_{0}^{\infty}\mathbb{P}\{X>x\}dx,
\end{equation}
for $X>0$, the relationship between coverage probability and spectral efficiency can be established, and the spectral efficiency of UT$_u$ is given by
\begin{equation}
\begin{aligned}
    \text{SE}_u
    & = \int_{0}^{\infty} \mathbb{P} \left\{ \bigcap\limits_{v=u}^{K}{\left\{ \text{SINR}_{u}^{v}> 2^t-1 \right\}} \right\} dt.
\end{aligned}
\end{equation}
where $\gamma = 2^t-1$. 
Finally, applying ${{Q}_{u}} = \underset{u\le v\le K}{\mathop{\max }}\,\frac{\gamma }{ \tilde{\xi}_{v} } = \underset{u\le v\le K}{\mathop{\max }}\,\frac{2^t-1 }{ \tilde{\xi}_{v} }$ and an integral over $t$ in \eqref{Formula_P_Lambda_u_4}, 
the spectral efficiency of UT$_u$ is obtained in \eqref{Formula_SE_u_1}, completing the proof.
\end{proof}

To evaluate the overall system performance in terms of spectral efficiency, the sum spectral efficiency of the served UTs is calculated as
\begin{equation}
    \text{SE}^{\text{sum}} = \sum_{u=1}^{K} \text{SE}_u.
\end{equation}

\subsection{Benchmark Schemes Description}
By pairing satellite serving modes, either CoMS or SglS, with access schemes, i.e., NOMA or OMA, the effectiveness of the proposed CoMS-NOMA approach is evaluated against three benchmark schemes: (a) CoMS-OMA, (b) SglS-NOMA, and (c) SglS-OMA. 
Specifically, time-division multiple access (TDMA) is adopted as the representative scheme for the OMA-based benchmarks.
Correspondingly, the related expressions for SINR, coverage probability, or spectral efficiency of these benchmark schemes are briefly presented below to illustrate the analytical procedure.

\subsubsection{CoMS-OMA}
In OMA-based cooperative transmission, the UTs are served cooperatively by multiple satellites without inducing intra-satellite interference, as SIC is not required in this scenario.
The SINR at UT$_u$ of the message intended for UT$_v$ is represented as 
\begin{equation}
\begin{aligned}
    \text{SINR}_{u,\text{OMA}}^{v,\text{CoMS}}
    & = \frac{\sum\limits_{i\in {{\mathcal{S}}_{\text{CS}}}}{{{\left| {{h}_{u,i}} \right|}^{2}}{{\xi }_{v,i}}}} { \sum\limits_{j\in {{\mathcal{I}}_{\text{IS}}}}{I_{j}^{\text{inter}}}+{{{\bar{\sigma }}}^{2}}}.
\end{aligned}
\end{equation}
By substituting $Q_u$ in \eqref{Formula_P_Lambda_u_4} with $\gamma$, the coverage probability of UTs based on CoMS-OMA can be obtained.

Regarding spectral efficiency, the serving satellites cooperatively serve each UT$_u$ ($1 \leq u \leq K$) in turn in a time-division manner, where each UT is allocated a time slot $t_i = \frac{1}{K}$. 
In this case, by replacing ${{Q}_{u}} = \underset{u\le v\le K}{\mathop{\max }}\,\frac{2^t-1 }{ \tilde{\xi}_{v} }$ in \eqref{Formula_SE_u_1} with ${{Q}_{u}} = 2^t-1$, the spectral efficiency of CoMS-OMA UTs can be derived.

\subsubsection{SglS-NOMA}
By selecting the nearest satellite $i$ as the serving satellite and treating the remaining satellites as interfering ones, the SINR at UT$_u$ of the message intended for UT$_v$ is already shown in \eqref{Formula_SlgS_NOMA_SINR_1} or \eqref{Formula_SlgS_NOMA_SINR_2}.
Then, the coverage probability at UT$_u$ is written as 
\begin{equation}
\begin{aligned}
     \mathbb{P}_{\text{NOMA}}^{\text{SglS}} 
    & =\mathbb{P}\left( \bigcap\limits_{v=u}^{K}{\left\{ \text{SINR}_{u}^{v,i}>\gamma  \right\}} \right) \\
    & =\mathbb{P} \left( \bigcap\limits_{v=u}^{K} \left\{ \frac{ {{{\left| {{h}_{u,i}} \right|}^{2}}{{\xi }_{v,i}}} }{  {I_{i}^{\text{intra}}} + \sum\limits_{j\in {{\mathcal{I}}_{\text{IS}}}}{I_{j}^{\text{inter}}}+{{{\bar{\sigma }}}^{2}} } > \gamma  \right\} \right).
\end{aligned}
\end{equation}

\subsubsection{SglS-OMA}
Based on the expressions in SglS-NOMA, the SINR at UT$_u$ of the message intended for UT$_v$ in SglS-OMA is obtained by removing the intra-satellite interference term, i.e., ${I_{i}^{\text{intra}}}$. 
Then, the coverage probability is given by
\begin{equation}
\begin{aligned}
     \mathbb{P}_{\text{OMA}}^{\text{SglS}} 
    & =\mathbb{P} \left( \bigcap\limits_{v=u}^{K} \left\{ \frac{ {{{\left| {{h}_{u,i}} \right|}^{2}}{{\xi }_{v,i}}} }{  \sum\limits_{j\in {{\mathcal{I}}_{\text{IS}}}}{I_{j}^{\text{inter}}}+{{{\bar{\sigma }}}^{2}} } > \gamma  \right\} \right).
\end{aligned}
\end{equation}

The above derivations are straightforward based on the simulation and analysis of CoMS-NOMA and are therefore omitted for brevity to avoid repetitiveness.

\section{Simulations and Numerical Results}
In this section, we first evaluate the coverage performance of CoMS-NOMA networks and present Monte Carlo simulation results to validate the analytical results and investigate the impact of key system parameters.
Both coverage and spectral efficiency are compared with the increasing NOMA power allocation coefficient to demonstrate the superiority of CoMS-NOMA over representative benchmark schemes.

\begin{table*}[t]
\setlength{\abovecaptionskip}{0cm}
\captionsetup{font={scriptsize}}
\caption{Default Parameters Setup}
\label{parameter}
\begin{center}
\small
\begin{tabular}{m{5.8cm}<{} m{2.2cm}<{} m{5.8cm}<{} m{2.2cm}<{}}
\hline
\hline
\textbf{Description and Notation} & \textbf{Value} & \textbf{Description and Notation} & \textbf{Value}  \\
\hline
\hline
Radius of Earth, $R_\text{E}$ &  6,371 km   &
Altitude of satellites, $R_\text{S}$ &  500 km \\
Satellite transmit power, $P$ & 50 dBm &
Path-loss coefficient, $\alpha$ & 2.0 \\
Satellite main-lobe gain, $G_\text{ml}$ & 30 dBi &
Satellite side-lobe gain, $G_\text{sl}$ &  10 dBi \\
TN serving area radius, $R_{\text{T}}$ & 200 km  &
Receiver noise power, ${\sigma}^2$ & -110 dBm \\
Carrier frequency, $f_c$ & 1 GHz &
Error propagation factor, $\varpi$ & 0.1   \\
\hline
\hline
\end{tabular}
\vspace{-6mm} 
\end{center}
\end{table*}

\subsection{Simulation Setup}

The default parameters and their corresponding values are summarized in Table \ref{parameter}. These settings are mainly based on \cite{li2026coverage,li2019non,elhalawany2022outage,shang2024multi,li2026downlink}, and may be adjusted according to different scenarios of interest.
Moreover, following the characteristics of PPP, with an average number of $N = 3,000$ satellites distributed at the same orbital layer, the satellite density is calculated as $\lambda_{\text{S}} = \frac{N}{4\pi {R_{\text{S}}}^2}$. 
By varying the serving region angle $\eta$, the number of cooperative satellites can be controlled. The CoMS (with three satellites) and SglS cases are realized by setting $\eta=35^{\circ}$ and $\eta=21^{\circ}$, respectively.
Three UTs are considered, and the power allocation coefficients $\left[\xi_1,\xi_2,\xi_3\right]$ for UT$_1$, UT$_2$, UT$_3$ are set to $\left[1/6, 1/3, 1/2\right]$.

\captionsetup{font={scriptsize}}
\begin{figure}[tp]
\begin{center}
\setlength{\abovecaptionskip}{+0.2cm}
\setlength{\belowcaptionskip}{-0.0cm}
\centering
  \includegraphics[width=3.0in, height=2.3in]{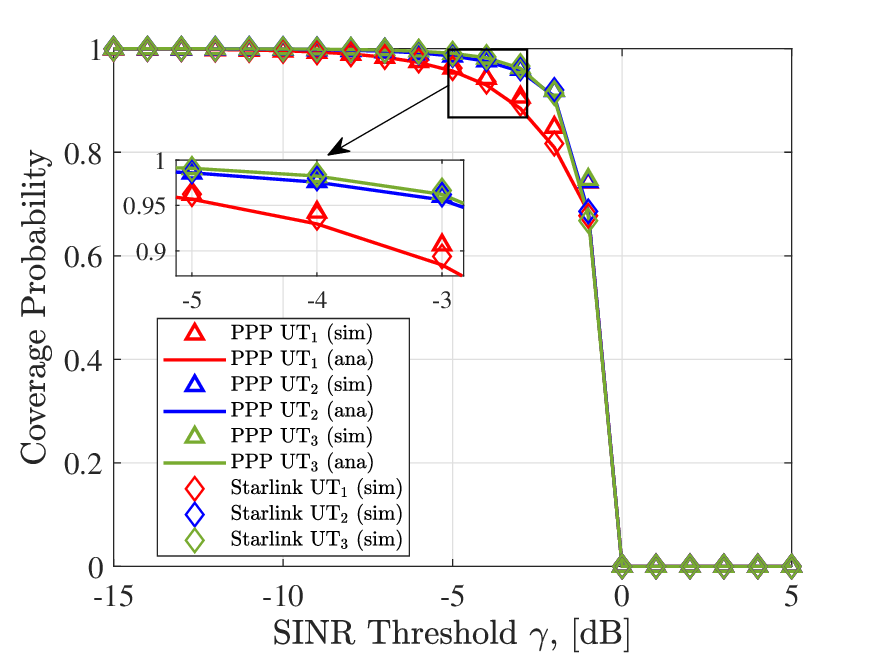}
\renewcommand\figurename{Fig.}
\caption{Coverage probability versus SINR threshold for different constellations.}
\label{Exp_fig6}
\end{center}
\vspace{-7mm}
\end{figure}

\captionsetup{font={scriptsize}}
\begin{figure}[tp]
\begin{center}
\setlength{\abovecaptionskip}{+0.2cm}
\setlength{\belowcaptionskip}{-0.0cm}
\centering
  \includegraphics[width=3.0in, height=2.3in]{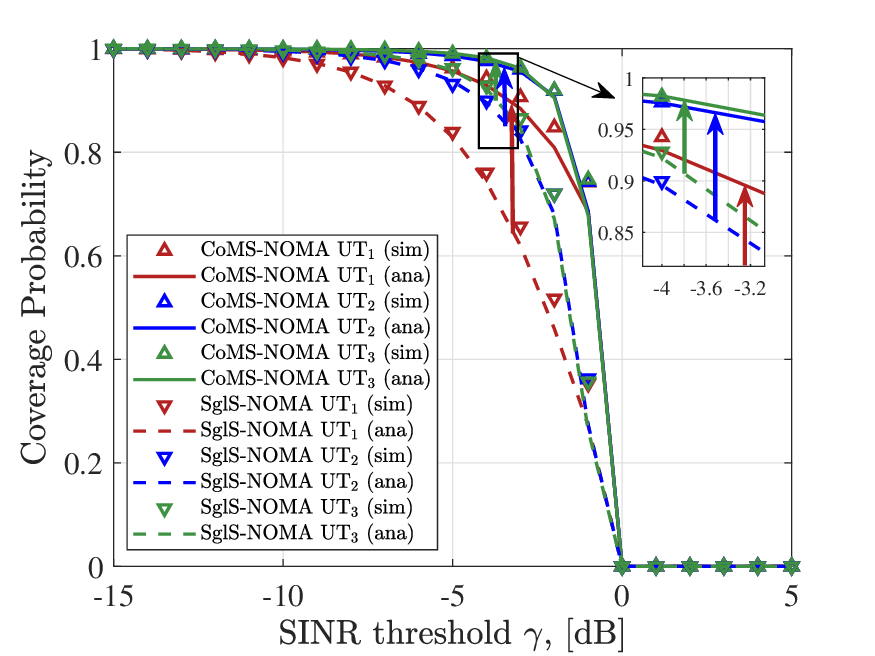}
\renewcommand\figurename{Fig.}
\caption{Coverage probability versus SINR threshold for SglS-NOMA, CoMS-NOMA}
\label{Exp_fig7}
\end{center}
\vspace{-7mm}
\end{figure}

\subsection{Coverage and Spectral Efficiency}

Fig. \ref{Exp_fig6} compares the coverage probability of three UTs versus the SINR threshold $\gamma$ under two different constellation models, i.e., the PPP-based constellation model and the actual Starlink constellation model. 
The analytical coverage expressions closely align with the Monte Carlo simulation results, verifying their accuracy.
In addition, the PPP-based constellation model can capture the statistical behavior of the Starlink constellation with an inclination of 53$^{\circ}$.
Hence, the satellite distribution is assumed to be independent of the PPP. 
Furthermore, the coverage probabilities of all three UTs are observed to decrease and converge to zero at the same SINR threshold. This is attributed to the NOMA necessary condition, which ensures successful operation only when all UTs can be simultaneously decoded. Once this condition is violated for any UT, NOMA transmission ceases.
Moreover, a higher coverage performance is achieved with larger power allocation coefficients for UTs. This indicates that coverage is predominantly influenced by power allocation, rather than marginally shortened satellite-UT distances.

The comparison between CoMS-NOMA and SglS-NOMA is illustrated in Fig. \ref{Exp_fig7}, where both analytical and simulation results are presented for three UTs. 
The close match between the analytical and simulation curves further validates the accuracy of the developed theoretical model. 
It can be observed that CoMS-NOMA consistently achieves a higher coverage probability across almost all SINR thresholds than SglS-NOMA. These higher achievements benefit from cooperative signal enhancement enabled by cooperative satellites. 
In particular, the minimum coverage probability of the UTs served by CoMS-NOMA surpasses the maximum coverage probability of those served by SglS-NOMA when $\gamma > -5$ dB. This indicates that CoMS-NOMA can maintain superior coverage performance even under stringent SINR requirements, and also demonstrates the enhanced reliability and robustness of CoMS-NOMA over SglS-NOMA systems.

\captionsetup{font={scriptsize}}
\begin{figure}[tp]
\begin{center}
\setlength{\abovecaptionskip}{+0.2cm}
\setlength{\belowcaptionskip}{-0.0cm}
\centering
  \includegraphics[width=3.0in, height=2.3in]{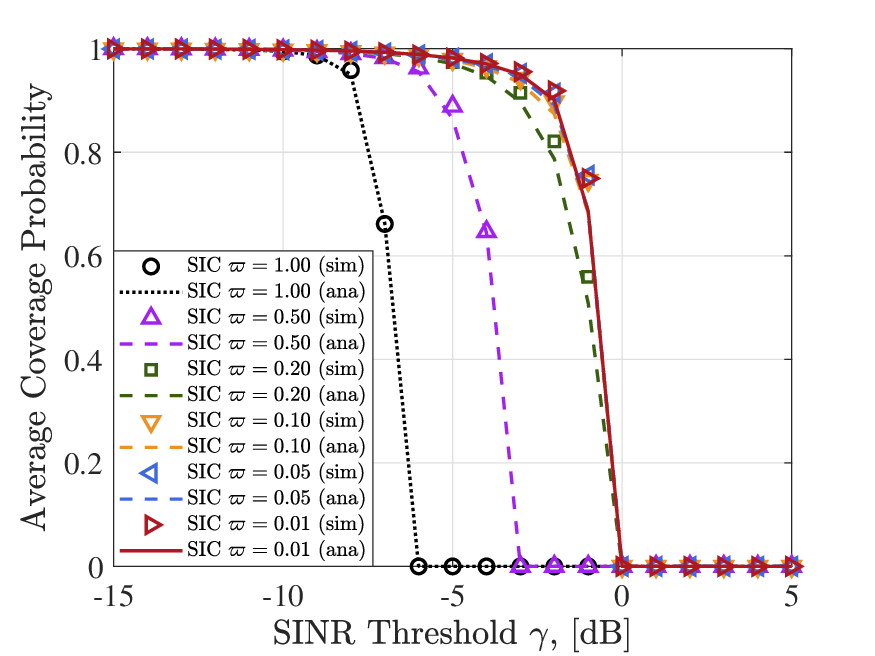}
\renewcommand\figurename{Fig.}
\caption{Coverage probability versus SINR threshold under different values of error propagation factor.}
\label{Exp_fig8}
\end{center}
\vspace{-7mm}
\end{figure}

\captionsetup{font={scriptsize}}
\begin{figure}[tp]
\begin{center}
\setlength{\abovecaptionskip}{+0.2cm}
\setlength{\belowcaptionskip}{-0.0cm}
\centering
  \includegraphics[width=3.0in, height=2.3in]{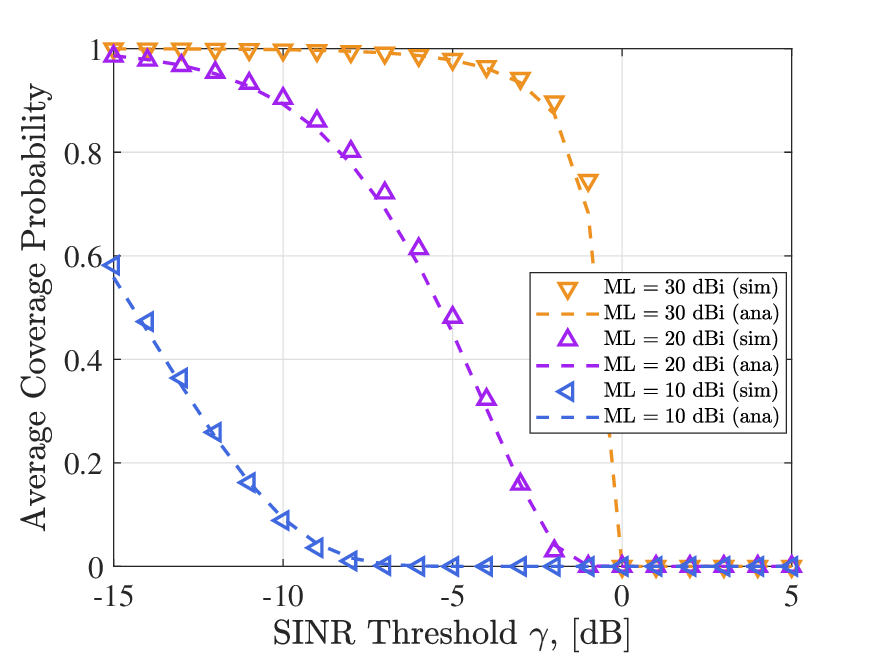}
\renewcommand\figurename{Fig.}
\caption{Coverage probability versus SINR threshold under different main-lobe gains.}
\label{Exp_fig9}
\end{center}
\vspace{-7mm}
\end{figure}

Fig. \ref{Exp_fig8} shows the impact of SIC conditions, characterized by the error propagation factor $\varpi$, on the average coverage probability of UTs. A series of $\varpi$ values, including $\varpi=1, 0.5, 0.2, 0.1, 0.05, 0.01$, are evaluated.
It is observed that under the worst-case decoding condition, i.e., when no SIC is performed by $\varpi = 1$, the average coverage probability is substantially lower than that achieved under other $\varpi$ values.
As $\varpi$ decreases, which corresponds to reduced residual interference under more advanced SIC receiver implementations, the average coverage probability correspondingly increases. 
In contrast, larger $\varpi$ values represent stronger residual error propagation caused by imperfect SIC decoding in practical receivers.
However, when $\varpi$ falls below $0.2$, the improvement becomes marginal, indicating diminishing returns from further enhancing the SIC accuracy.
This implies that while the effectiveness of the SIC has a significant impact on coverage, achieving a near-perfect SIC, i.e., $\varpi < 0.1$, yields limited additional benefit once the quality of the SIC is high.

In Fig. \ref{Exp_fig9}, under the fixed satellite side-lobe gain, the influence of the satellite main-lobe (ML) gains is illustrated for $30$ dBi, $20$ dBi, and $10$ dBi, respectively. 
As can be seen, a higher main-lobe gain significantly improves the average coverage probability across all SINR thresholds, with the curve for $30$ dBi main-lobe gain consistently outperforming the others. 
Moreover, the point at which the average coverage probability converges to zero shifts to the right as the main-lobe gain increases. This indicates that systems with higher main-lobe gains can maintain coverage up to higher SINR thresholds before losing connectivity.

\captionsetup{font={scriptsize}}
\begin{figure}[tp]
\begin{center}
\setlength{\abovecaptionskip}{+0.2cm}
\setlength{\belowcaptionskip}{-0.0cm}
\centering
  \includegraphics[width=3.0in, height=2.3in]{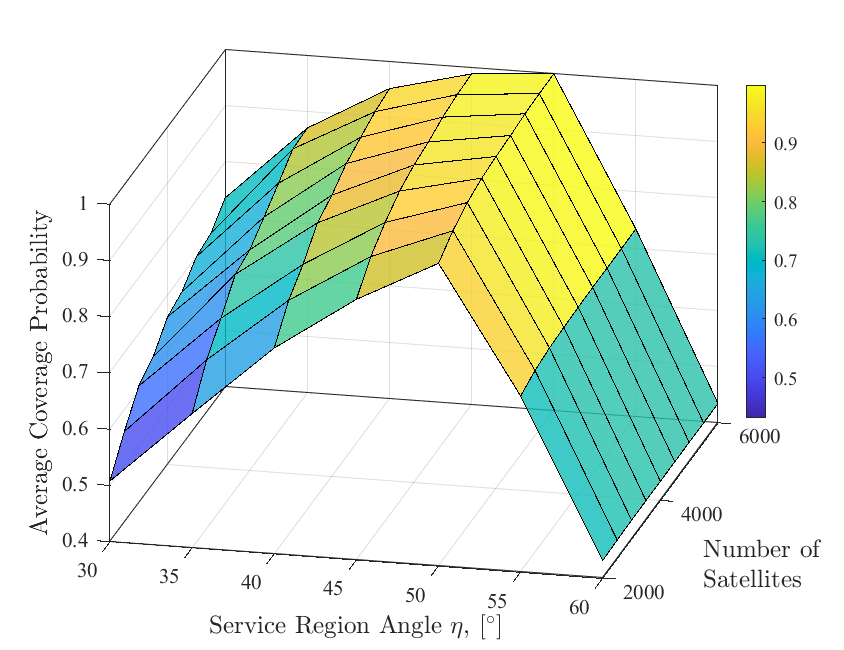}
\renewcommand\figurename{Fig.}
\caption{Coverage probability under different service region angles and total numbers of satellites, with $\gamma = 2$ dB and a user density of $\lambda_{\text{U}} = 1 \times 10^{-6}$/km$^2$.}
\label{Exp_fig10}
\end{center}
\vspace{-7mm}
\end{figure}

\captionsetup{font={scriptsize}}
\begin{figure}[tp]
\begin{center}
\setlength{\abovecaptionskip}{+0.2cm}
\setlength{\belowcaptionskip}{-0.0cm}
\centering
  \includegraphics[width=3.0in, height=2.3in]{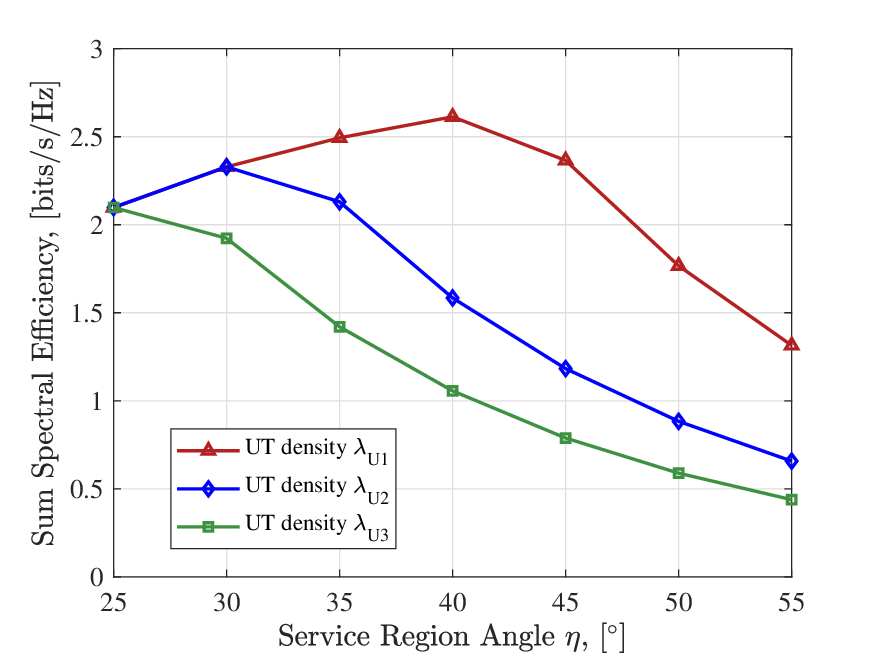}
\renewcommand\figurename{Fig.}
\caption{Coverage probability under different densities of UTs, where $\lambda_{\text{U1}} = 1 \times 10^{-6}$/km$^2$, $\lambda_{\text{U2}} = 2 \times 10^{-6}$/km$^2$, and $\lambda_{\text{U3}} = 3 \times 10^{-6}$/km$^2$.}
\label{Exp_fig11}
\end{center}
\vspace{-7mm}
\end{figure}

Fig. \ref{Exp_fig10} illustrates the joint influence of the service region angle and the total number of satellites on the average coverage probability of UTs in a three-dimensional (3D) plot. 
Notably, in line with practical implementations, a realistic user scheduling scenario is considered: as the UTs' service region angle increases, the corresponding serving range of each satellite expands. 
While this broader coverage improves spatial accessibility, it simultaneously decreases the likelihood that the targeted UTs are selected for service by a cooperative satellite, thereby introducing a fairness trade-off.
Specifically, increasing the total number of satellites raises the number of cooperative satellites within the service region angle of the targeted UTs, which enhances the aggregated received signal power and improves coverage performance.
Conversely, when the service region angle is relatively small, i.e., below $50^\circ$, the average coverage probability grows with the service region angle due to the enlarged service area. 
However, beyond approximately $50^\circ$, the coverage performance begins to decline, as the expanded serving range of each cooperative satellite encompasses more potential UTs within its coverage. 
This increased user density reduces the probability of each target UT being scheduled for service, which causes a drop in the average coverage probability.
This observation suggests that the service region angle should be carefully selected in practical systems to balance cooperative transmission gain and user scheduling efficiency under different user densities.

Beyond coverage probability, we now evaluate the system's spectral efficiency with user scheduling considerations. 
Fig. \ref{Exp_fig11} plots the sum spectral efficiency against the service region angle under three different user densities, i.e., $\lambda_{\text{U1}}$, $\lambda_{\text{U2}}$, and $\lambda_{\text{U3}}$. 
As observed, the sum spectral efficiency initially increases with the service region angle, reaching an optimum point, after which further expansion leads to a gradual decline.
This is because a larger service region angle increases the number of cooperative satellites within the serving region, which strengthens the aggregated desired signal power and consequently enhances the overall spectral efficiency in the initial stage.
However, beyond a certain angle, the satellite’s transmission resources are shared among more UTs other than NOMA UTs, reducing the scheduling priority of each individual and leading to a decrease in the sum spectral efficiency.
The optimal service region angle decreases with increasing user density and finally disappears for $\lambda_{\text{U3}}$, as the benefit of having more cooperative satellites is offset by the reduced serving probability due to more non-NOMA UTs served.
Therefore, practical CoMS-NOMA systems should adopt moderate power allocation configurations to balance user fairness, SIC reliability, and overall spectral efficiency.

\captionsetup{font={scriptsize}}
\begin{figure}[tp]
\begin{center}
\setlength{\abovecaptionskip}{+0.2cm}
\setlength{\belowcaptionskip}{-0.0cm}
\centering
  \includegraphics[width=3.0in, height=2.2in]{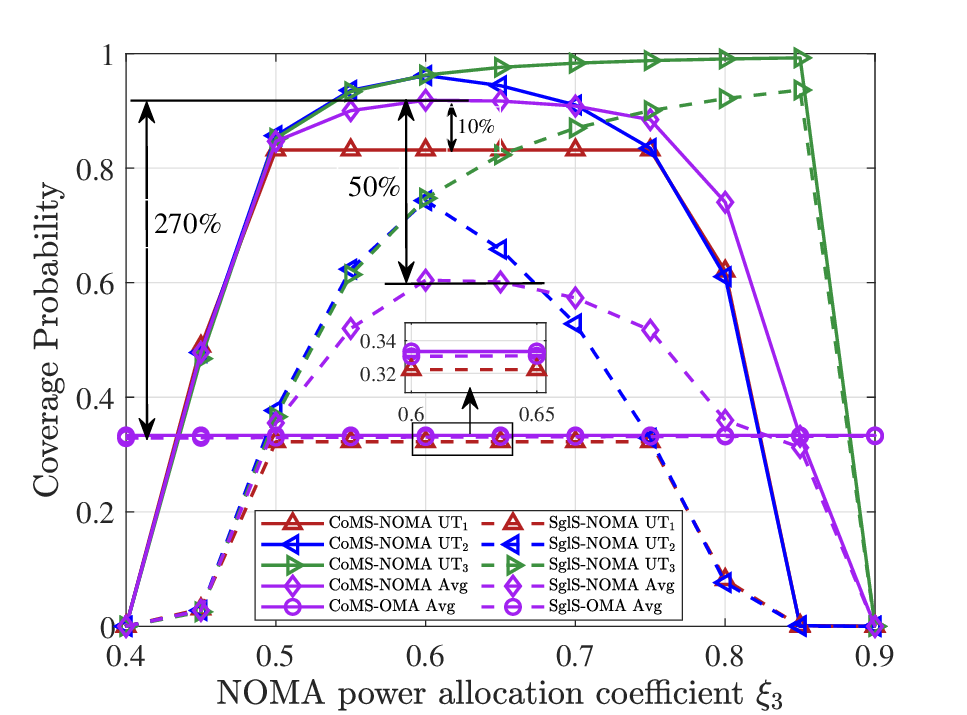}
\renewcommand\figurename{Fig.}
\caption{Coverage probability versus NOMA power allocation coefficient $\xi_3$, with SINR threshold $\gamma = -2.5$ dB.}
\label{Exp_fig12}
\end{center}
\vspace{-8mm}
\end{figure}

\captionsetup{font={scriptsize}}
\begin{figure}[tp]
\begin{center}
\setlength{\abovecaptionskip}{+0.2cm}
\setlength{\belowcaptionskip}{-0.0cm}
\centering
  \includegraphics[width=3.0in, height=2.2in]{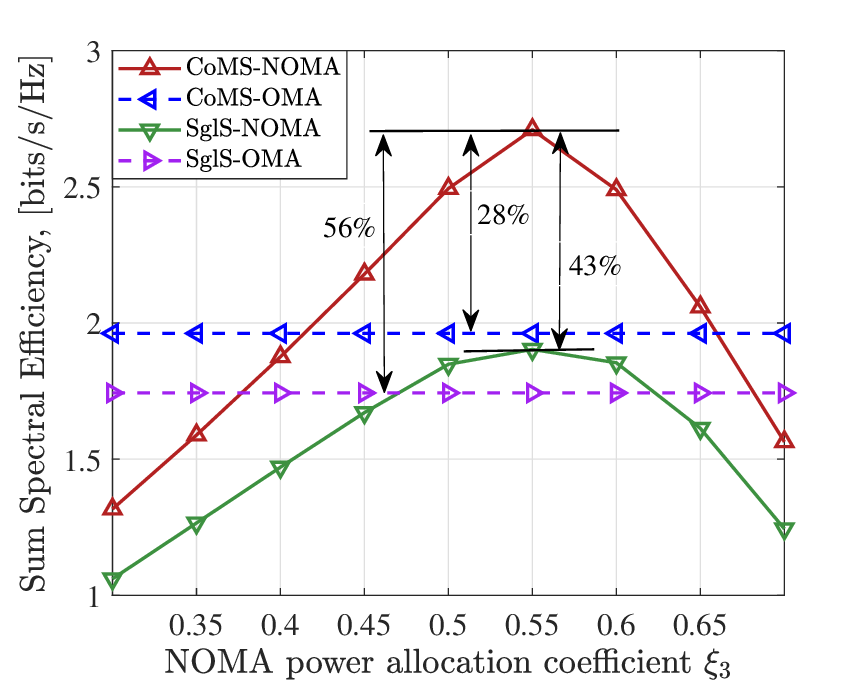}
\renewcommand\figurename{Fig.}
\caption{Sum spectral efficiency of UTs versus NOMA power allocation coefficient $\xi_3$, with $\xi_1 = 1/6$, $\xi_2 = 1-\xi_1-\xi_3$.}
\label{Exp_fig13}
\end{center}
\vspace{-8mm}
\end{figure}

\subsection{Comparison of NOMA and OMA}

Fig. \ref{Exp_fig12} compares the coverage probability of CoMS-NOMA with CoMS-OMA, SglS-NOMA, and SglS-OMA under a target SINR threshold of $\gamma=-2.5$ dB. 
Overall, CoMS-NOMA achieves superior coverage performance across a broad range of power allocation coefficients $\xi_3$, providing about a 50\% coverage gain over SglS-NOMA when $0.5 < \xi_3<0.8$, and exceeding CoMS-OMA and SglS-OMA by about 270\% when $0.43 < \xi_3<0.85$ in coverage. 
These results indicate that while CoMS-NOMA consistently outperforms CoMS-OMA, an appropriate selection of NOMA power allocation coefficients is crucial to guarantee better coverage gains over OMA counterparts; otherwise, the benefits of adopting NOMA become negligible.
Specifically, for CoMS-NOMA, the coverage probability of UT$_3$ continues to improve as its allocated power $\xi_3$ increases.
When $\xi_3 < 0.6$, this power adjustment yields a win–win situation for all three UTs. However, as $\xi_3$ further increases, both UT$_1$ and UT$_2$ experience a gradual decline in coverage, which eventually drops to zero. This is because excessive power imbalance impairs SIC operations and degrades overall NOMA transmission performance.

Fig. \ref{Exp_fig13} presents the sum spectral efficiency performance versus the NOMA power allocation coefficient $\xi_3$. The sum spectral efficiency of CoMS-NOMA first increases with $\xi_3$, reaching its maximum around $\xi_3 = 0.55$, and then decreases as an excessive power imbalance degrades the performance of the other UTs. A similar but less evident trend is observed for SglS-NOMA, with an overall lower sum spectral efficiency as a result of the absence of cooperative gain.
In contrast, both CoMS-OMA and SglS-OMA maintain nearly constant sum spectral efficiency values throughout $\xi_3$, as the sum spectral efficiency in OMA is insensitive to power allocation variations. 
Specifically, under the current parameter setting, CoMS-NOMA provides up to 28\%, 43\%, and 56\% improvements in sum spectral efficiency over CoMS-OMA, SglS-NOMA, and SglS-OMA, respectively.
In general, CoMS-NOMA achieves the highest spectral efficiency among all schemes within a wide range of $\xi_3$, highlighting the spectral efficiency advantage brought about by cooperative multi-satellite transmission and power-domain multiplexing.

\section{Conclusion}
In this paper, we investigated the downlink performance of CoMS-NOMA networks from a system-level perspective. A stochastic geometry-based analytical framework was developed, where satellites were modeled as a PPP on a spherical cap and served UTs were uniformly distributed within the serving area. The analytical expressions for the coverage probability and spectral efficiency were derived, and the analytical results were validated through extensive Monte Carlo simulations.
Numerical results showed that CoMS-NOMA can achieve approximately 50\% coverage improvement over SglS-NOMA and up to 56\% sum spectral efficiency enhancement over conventional benchmark schemes under appropriate power allocation settings.
Moreover, achieving near-perfect SIC is unnecessary, as a moderately accurate SIC condition is sufficient for reliable performance, whereas a higher main-lobe gain significantly enhances system average coverage. 
These findings will provide guidance for interference-resilient and spectrum-efficient CoMS-NOMA transmission design.

\appendices
\section{Proof of Proposition 1}
As shown in Fig. \ref{fig:Angle}, the relationship between zenith angle ${{\varphi }_{\text{z}}}$ and dome angle ${{\varphi }_{\text{d}}}$ can be written as \cite{al2021analytic}
\begin{equation}
\begin{aligned}
    {{\varphi }_{\text{d}}} 
    &= {{\cot }^{-1}}\left( \frac{\cot {{\varphi }_{\text{z}}}+\sqrt{{{\kappa }^{2}}\left( 1+{{\cot }^{2}}{{\varphi }_{\text{z}}}-{{\kappa }^{2}} \right)}}{1-{{\kappa }^{2}}} \right) 
    \triangleq \psi \left( {{\varphi }_{\text{z}}} \right),
\end{aligned}
\label{Formula:varphi_d_1}
\end{equation}
where $\kappa =\frac{{{R}_{\text{E}}}}{{{R}_{\text{E}}}+{{H}_{\text{S}}}} \in (0,1)$. Herein, satellite locations are modeled as a homogeneous PPP on the spherical surface of radius ${{R}_{\text{S}}}={{R}_{\text{E}}}+{{H}_{\text{S}}}$, and satellites within the service region angle $\eta$ are selected as cooperative satellites. 
By the conditional uniformity property of the PPP, if a satellite exists in the UT's service region, correspondingly $0 \le \varphi_{\text{z}} \le \eta$, the dome angle $\varphi_{\text{d}}$ is uniformly distributed with respect to the spherical surface measure over an interval $I_{\varphi_{\text{d}}} =\left[ \psi \left( 0 \right),\psi \left( \eta  \right) \right]$.
Denote the service region on the sphere as $\mathcal{A}$, as shown in Fig. \ref{fig:Angle}. 

On a spherical surface of radius ${{R}_{\text{S}}}$, a point on the sphere is expressed in Cartesian coordinates as $\left( x,y,z \right)=\left( {{R}_{\text{S}}}\sin {{\varphi }_{\text{d}}}\cos \xi ,{{R}_{\text{S}}}\sin {{\varphi }_{\text{d}}}\sin \xi ,{{R}_{\text{S}}}\cos {{\varphi }_{\text{d}}} \right)$, where $\left( \xi ,{{\varphi }_{\text{d}}} \right)\in \mathcal{A}$, $\xi \in \left[ 0,2\pi  \right)$ is azimuthal and ${{\varphi }_{\text{d}}}$ is polar. As two directions are orthogonal on the sphere, the area element is $dS={{R}_{\text{S}}}^{2}\sin {{\varphi }_{\text{d}}}d\xi d{{\varphi }_{\text{d}}}$. Satellites are uniformly distributed over region $\mathcal{A}$ with respect to surface area measure, and the probability that a satellite falls in $dS$ is $\mathbb{P}\left( \left( \xi ,{{\varphi }_{\text{d}}} \right)\in dS \right)=\frac{dS}{{{S}_{\mathcal{A}}}}$, where ${{S}_{\mathcal{A}}}=\int{\int_{\mathcal{A}}{dS}}$ is the surface area of $\mathcal{A}$. This probability also equals the integral of the joint PDF over the element $\mathbb{P}\left( \left( \xi ,{{\varphi }_{\text{d}}} \right)\in dS \right)={{f}_{\xi ,{{\varphi }_{\text{d}}}}}\left( {{\varphi }_{\xi ,{{\varphi }_{\text{d}}}}} \right)d\xi d{{\varphi }_{\text{d}}}$. Thus, after cancellation, we have ${{f}_{\xi ,{{\varphi }_{\text{d}}}}}\left( {{\varphi }_{\xi ,{{\varphi }_{\text{d}}}}} \right)=\frac{{{R}_{\text{S}}}^{2}}{{{S}_{\mathcal{A}}}}\sin {{\varphi }_{\text{d}}}$, showing that the joint PDF is proportional to $\sin {{\varphi }_{\text{d}}}$ with constant $\frac{{{R}_{\text{S}}}^{2}}{{{S}_{\mathcal{A}}}}$.

Additionally, the surface area of the region $\mathcal{A}$ is expressed as
${{S}_{\mathcal{A}}} 
=\int{\int_{\mathcal{A}}{dS}}
= \int_{0}^{2\pi }{\int_{\psi \left( 0 \right)}^{\psi \left( \eta  \right)}{{{R}_{\text{S}}}^{2}\sin tdt}}  
=2\pi {{R}_{\text{S}}}^{2}\left[ \cos \left( \psi \left( 0 \right) \right)-\cos \left( \psi \left( \eta  \right) \right) \right]  
=2\pi {{R}_{\text{S}}}^{2}\left[ 1-\cos \left( \psi \left( \eta  \right) \right) \right].$
Integrating over $\xi \in \left[ 0,2\pi  \right)$ azimuthally yields the marginal PDF of ${{\varphi }_{\text{d}}}$ shown as 
${{f}_{{{\varphi }_{\text{d}}}}}\left( {{\varphi }_{\text{d}}} \right) = \frac{\sin {{\varphi }_{\text{d}}}}{1-\cos \left( \psi \left( \eta  \right) \right)}$, where $\varphi_{\text{d}}\in I_{\varphi_{\text{d}}}$.
Recall the relationship ${{\varphi }_{\text{d}}}=\psi \left( {{\varphi }_{\text{z}}} \right)$ and the random variable transformation ${{f}_{{{\varphi }_{\text{z}}}}}\left( {{\varphi }_{\text{z}}} \right)={{f}_{{{\varphi }_{\text{d}}}}}\left( \psi \left( {{\varphi }_{\text{z}}} \right) \right) \left| \frac{d\psi \left( {{\varphi }_{\text{z}}} \right)}{d{{\varphi }_{\text{z}}}} \right|$ for $0\le {{\varphi }_{\text{z}}}\le \eta $. 
We now derive the derivative of $\psi \left( {{\varphi }_{\text{z}}} \right)$ over ${{\varphi }_{\text{z}}}$.

Let $q=\cot {{\varphi }_{\text{z}}}$, $\frac{dq}{d{{\varphi }_{\text{z}}}}=-{{\csc }^{2}}{{\varphi }_{\text{z}}}$, and $D\left( q \right)=\sqrt{{{\kappa }^{2}}\left( 1+{{q}^{2}}-{{\kappa }^{2}} \right)}$, so that $\frac{dD\left( q \right)}{dq}=\frac{{{\kappa }^{2}}q}{D\left( q \right)}$.
We also denote $\psi \left( {{\varphi }_{\text{z}}} \right)={{\cot }^{-1}}\left( J\left( q \right) \right)$, where $J\left( q \right)=\frac{q+D\left( q \right)}{1-{{\kappa }^{2}}}=\cot \left( \psi \left( {{\varphi }_{\text{z}}} \right) \right)$, which brings $1+{{J}^{2}}\left( q \right)={{\csc }^{2}}\left( \psi \left( {{\varphi }_{\text{z}}} \right) \right)$ based on \eqref{Formula:varphi_d_1}, and $\frac{d J\left( q \right)}{dq}=\frac{1}{1-{{\kappa }^{2}}}\left( 1+\frac{{{\kappa }^{2}}q}{D\left( q \right)} \right)$. Combining the above terms offers the derivative of $\psi \left( {{\varphi }_{\text{z}}} \right)$, which is written as
\begin{equation}
\begin{aligned}
    \frac{d\psi \left( {{\varphi }_{\text{z}}} \right)}{d{{\varphi }_{\text{z}}}}
    & =\frac{d{{\cot }^{-1}}\left( J\left( q \right) \right)}{d{{\varphi }_{\text{z}}}} \\
    & = -\frac{1}{1+{{J}^{2}}\left( q \right)}\frac{d J\left( q \right)}{dq}\frac{dq}{d{{\varphi }_{\text{z}}}}  \\
    & =\frac{{{\csc }^{2}}{{\varphi }_{\text{z}}}}{{{\csc }^{2}}\left( \psi \left( {{\varphi }_{\text{z}}} \right) \right)}\frac{1}{1-{{\kappa }^{2}}}\left( 1+\frac{{{\kappa }^{2}}q}{D\left( q \right)} \right).
\end{aligned}
\end{equation}
By inserting ${{f}_{{{\varphi }_{\text{d}}}}}\left( {{\varphi }_{\text{d}}} \right)$ and ${{\varphi }_{\text{d}}} \triangleq \psi \left( {{\varphi }_{\text{z}}} \right)$ of \eqref{Formula:varphi_d_1} into ${{f}_{{{\varphi }_{\text{z}}}}}\left( {{\varphi }_{\text{z}}} \right)={{f}_{{{\varphi }_{\text{d}}}}}\left( \psi \left( {{\varphi }_{\text{z}}} \right) \right) \left| \frac{d\psi \left( {{\varphi }_{\text{z}}} \right)}{d{{\varphi }_{\text{z}}}} \right|$, the PDF of ${{\varphi }_{\text{z}}}$ is finally represented as in \eqref{Formula:PDF_varphi_z}, which completes the proof.

\section{Proof of Lemma 2}
Due to the uniform distribution of the UTs in the area, angle $\zeta$ follows a uniform distribution, i.e., $\zeta \sim U\left( 0,2\pi  \right)$, so that ${{f}_{\zeta }}\left( \zeta  \right)=\frac{1}{2\pi }$ and ${{F}_{\zeta }}\left( \zeta  \right)=\frac{\zeta }{2\pi }$ for $\zeta \in \left[ 0,2\pi  \right]$. 
For variable transformation, denote ${{X}_{\zeta }}=\cos \zeta $, and we have $\frac{d{{X}_{\zeta }}}{d\zeta }=-\sin \zeta $, $\frac{d\zeta }{d{{X}_{\zeta }}}=-\frac{1}{\sin \zeta }$. Then, the PDF of ${{X}_{\zeta }}$ is 
\begin{equation}
\begin{aligned}
    {{f}_{{{X}_{\zeta }}}}\left( x \right)
    & ={{f}_{\zeta }}\left( \zeta  \right)\left| \frac{d\zeta }{dx} \right|  
     =\frac{1}{2\pi } \frac{1}{\sin \zeta }  
     =\frac{1}{2\pi \sqrt{1-{{x}^{2}}}}.
\end{aligned}
\end{equation}

Denote ${{X}_{\theta }}=\cos \theta $, then $\cos \zeta =\frac{{{X}_{\theta }}}{\sin {{\varphi }_{\text{z}}}}$, and $-\sin {{\varphi }_{\text{z}}}\le {{X}_{\theta }}\le \sin {{\varphi }_{\text{z}}}$. The PDF of ${X}_{\theta }$ is calculated as
\begin{equation}
\begin{aligned}
    {{F}_{{{X}_{\theta }}}}\left( x \right)
    & =\mathbb{P}\left\{ {{X}_{\theta }}\le x \right\}  
    =\mathbb{P}\left\{ \cos \zeta \le \frac{x}{\sin {{\varphi }_{\text{z}}}} \right\} \\
    & =\mathbb{P}\left\{ \zeta \ge \arccos \left( \frac{x}{\sin {{\varphi }_{\text{z}}}} \right) \right\}  \\
    & =1-\frac{1}{2\pi }\arccos \left( \frac{x}{\sin {{\varphi }_{\text{z}}}} \right),
\label{Formula:X_theta_PDF}
\end{aligned}
\end{equation}
and the PDF of ${X}_{\theta }$ is given by
\begin{equation}
\begin{aligned}
    {{f}_{{{X}_{\theta }}|{{\varphi }_{\text{z}}}}}\left( x|{{\varphi }_{\text{z}}} \right)
    & =\frac{d{{F}_{{{X}_{\theta }}}}\left( x \right)}{dx}  \\
    & =\frac{d}{dx}\left[ 1-\frac{1}{2\pi }\arccos \left( \frac{x}{\sin {{\varphi }_{\text{z}}}} \right) \right] \nonumber \\
    & =\frac{1}{2\pi \sqrt{{{\sin }^{2}}{{\varphi }_{\text{z}}}-{{x}^{2}}}}.
\end{aligned}
\end{equation}
Using the total probability formula, we have
\begin{equation}
\begin{aligned}
    {{f}_{{{X}_{\theta }}}}\left( x \right)
    &=\int_{{{\varphi }_{\text{z,lb}}}}^{{{\varphi }_{\text{z,ub}}}}{{{f}_{{{X}_{\theta }}|{{\varphi }_{\text{z}}}}}\left( x|{{\varphi }_{\text{z}}} \right){{f}_{{{\varphi }_{\text{z}}}}}\left( {{\varphi }_{\text{z}}} \right)d{{\varphi }_{\text{z}}}}  \\
    &=\int_{{{\varphi }_{\text{z,lb}}}}^{{{\varphi }_{\text{z,ub}}}}{\frac{{{f}_{{{\varphi }_{\text{z}}}}}\left( {{\varphi }_{\text{z}}} \right)}{2\pi \sqrt{{{\sin }^{2}}{{\varphi }_{\text{z}}}-{{x}^{2}}}}d{{\varphi }_{\text{z}}}}  \\
    & =\int_{0}^{\eta }{\frac{{{f}_{{{\varphi }_{\text{z}}}}}\left( {{\varphi }_{\text{z}}} \right)}{2\pi \sqrt{{{\sin }^{2}}{{\varphi }_{\text{z}}}-{{x}^{2}}}}d{{\varphi }_{\text{z}}}},
\end{aligned}
\end{equation}
where ${\varphi }_{\text{z,lb}} = 0$ and ${\varphi }_{\text{z,ub}} = \eta$ represent the lower bound and upper bound of $\varphi$, respectively.
Finally, consider ${{X}_{\theta }}=\cos \theta $, the PDF of angle $\theta$ is calculated as 
\begin{equation}
\begin{aligned}
    {{f}_{\theta }}\left( \theta  \right)
    & ={{f}_{{{X}_{\theta }}}}\left( \cos \theta  \right)\left| \frac{d}{d\theta }\cos \theta  \right|  \\
    & = \int_{0}^{\eta }{\frac{ \sin \theta \cdot {{f}_{{{\varphi }_{\text{z}}}}}\left( {{\varphi }_{\text{z}}} \right)}{\pi \sqrt{{{\sin }^{2}}{{\varphi }_{\text{z}}}-{{\cos }^{2}}\theta }}d{{\varphi }_{\text{z}}}},
\end{aligned}
\end{equation}
which completes the proof.

\section{Proof of Lemma 3}
First, due to the distribution of satellites in the service region and UTs in the serving area, the minimum distance between a satellite and a user is ${{R}_{\text{SU},\min }}={{H}_{\text{S}}}$, as shown in Fig. \ref{fig:Spherical_cap}.

Second, conditioned on the dome angle $\eta \in \left( 0,\frac{\pi }{2} \right)$, the height from the point $\left( 0,0,{{R}_{\text{E}}} \right)$ is \cite{li2026downlink}
${{H}_{\eta }}=\left( \sqrt{{{R}_{\text{E}}}^{2}+\frac{{{R}_{\text{S}}}^{2}-{{R}_{\text{E}}}^{2}}{{{\cos }^{2}}\eta }}-{{R}_{\text{E}}} \right){{\cos }^{2}}\eta$.
The complementary distance ${{H}_{\eta ,\text{c}}} = {{H}_{\text{S}}}-{{H}_{\eta }}$ is calculated as 
\begin{align}
    {{H}_{\eta ,\text{c}}}
    & = {{R}_{\text{S}}}-{{R}_{\text{E}}}{{\sin }^{2}}\eta -\sqrt{{{R}_{\text{E}}}^{2}+\frac{{{R}_{\text{S}}}^{2}-{{R}_{\text{E}}}^{2}}{{{\cos }^{2}}\eta }}{{\cos }^{2}}\eta.
\label{Formula:H_eta_c}
\end{align}
The link distance $r_\text{T}$ is measured from a random UT to the center of the serving area. Then, using the cosine theorem, the maximum distance is given by 
${{R}_{\text{SU},\max }}=\sqrt{{{r}_{\text{T}}}^{2}+{{\left( \frac{{{H}_{\eta }}}{\cos \eta } \right)}^{2}}-2{{r}_{\text{T}}}\frac{{{H}_{\eta }}}{\cos \eta } \cos \left( \eta +\frac{\pi }{2} \right)}$, where $\cos \left( \eta +\frac{\pi }{2} \right) = -\sin \eta$.
The distance range is ${{R}_{\text{SU},\min }}\le {{r}_{\text{SU}}}\le {{R}_{\text{SU},\max }}$ as shown in \eqref{Formula:range_r_SU}, which completes the proof.

\section{Proof of Proposition 2}
Using the CDF of ${{X}_{\theta }}=\cos \theta $, the CDF of ${{X}_{{{r}_{\text{SU}}}^{2}}}={{r}_{\text{SU}}}^{2}$ is
\begin{equation}
\begin{aligned}
    {{F}_{{{X}_{{{r}_{\text{SU}}}^{2}}}}}\left( {{x}_{{{r}_{\text{SU}}}^{2}}} \right)
    & = \mathbb{P}\left\{ {{X}_{{{r}_{\text{SU}}}^{2}}}\le {{x}_{{{r}_{\text{SU}}}^{2}}} \right\} \\
    & = \mathbb{P}\left\{ {{r}_{\text{N}}}^{2}+{{r}_{\text{T}}}^{2}-2{{r}_{\text{N}}}{{r}_{\text{T}}}\cos \theta \le {{x}_{{{r}_{\text{SU}}}^{2}}} \right\} \nonumber \\ 
    & \overset{\left( a \right)}{\mathop{=}} \mathbb{P}\left\{ \cos \theta \ge \Psi({{x}_{{{r}_{\text{SU}}}^{2}}}) \right\} \\
    & \overset{\left( b \right)}{\mathop{=}} 1-{{F}_{{{X}_{\theta }}}}\left( \Psi({{x}_{{{r}_{\text{SU}}}^{2}}}) \right) 
    =\frac{1}{2\pi }\arccos \left( \Psi({{x}_{{{r}_{\text{SU}}}^{2}}}) \right),
\end{aligned}
\end{equation}
where in (a) we denote $\Psi(t)=\frac{ {{r}_{\text{N}}}^{2}+{{r}_{\text{T}}}^{2}- t }{2{{r}_{\text{N}}}{{r}_{\text{T}}}\sin {{\varphi }_{\text{z}}}}$, and (b) follows from the PDF of ${X}_{\theta }$ in \eqref{Formula:X_theta_PDF}.
To obtain the CDF of ${X}_{{{r}_{\text{SU}}}}=\sqrt{{{X}_{{{r}_{\text{SU}}}^{2}}}}$, we follow the definition of the CDF of ${X}_{{{r}_{\text{SU}}}}$, i.e., 
$\mathbb{P}\left\{ \sqrt{{{X}_{{{r}_{\text{SU}}}}}}\le {{x}_{{{r}_{\text{SU}}}}} \right\} = \mathbb{P}\left\{ {{X}_{{{r}_{\text{SU}}}^{2}}}\le {{x}_{{{r}_{\text{SU}}}}}^{2} \right\} = {{F}_{{{X}_{{{r}_{\text{SU}}}^{2}}}}}\left( {{x}_{{{r}_{\text{SU}}}}}^{2} \right)$.
Conditioned on $r_\text{N}$, $r_\text{T}$, $\varphi_\text{z}$, the CDF of ${r}_{\text{SU}}$ is given by
\begin{align}
    {{F}_{{{R}_{\text{SU}}}}}\left( {{r}_{\text{SU}}};{{r}_{\text{N}}},{{r}_{\text{T}}},{{\varphi }_{\text{z}}} \right)
    & =\frac{1}{2\pi }\arccos \left( \frac{{{r}_{\text{N}}}^{2}+{{r}_{\text{T}}}^{2}-{{r}_{\text{SU}}}^{2}}{2{{r}_{\text{N}}}{{r}_{\text{T}}}\sin {{\varphi }_{\text{z}}}} \right).
\end{align}
Then, the PDF of ${r}_{\text{SU}}$ is given by
\begin{align}
    & {{f}_{{{R}_{\text{SU}}}}}\left( {{r}_{\text{SU}}};{{r}_{\text{N}}},{{r}_{\text{T}}},{{\varphi }_{\text{z}}} \right) \nonumber 
    =\frac{d{{F}_{{{R}_{\text{SU}}}}}\left( {{r}_{\text{SU}}};{{r}_{\text{N}}},{{r}_{\text{T}}},{{\varphi }_{\text{z}}} \right)}{d{{r}_{\text{SU}}}} \nonumber \\
    & =\frac{{{r}_{\text{SU}}}}{2\pi {{r}_{\text{N}}}{{r}_{\text{T}}}\sin {{\varphi }_{\text{z}}}\sqrt{1-{{\left( \frac{{{r}_{\text{N}}}^{2}+{{r}_{\text{T}}}^{2}-{{r}_{\text{SU}}}^{2}}{2{{r}_{\text{N}}}{{r}_{\text{T}}}\sin {{\varphi }_{\text{z}}}} \right)}^{2}}}},
\end{align}
which completes the proof.

\bibliographystyle{IEEEtran}
\bibliography{references.bib}

\end{document}